\documentclass[twoside,runningheads]{llncs}

\usepackage[T1]{fontenc}
\usepackage{graphicx}

\usepackage{dlogic}
\usepackage{algorithm}
\usepackage{algpseudocode}   % algorithmicx
\usepackage[
  backend=biber,
  maxnames=99,
  url=false
]{biblatex}
\providecommand{\dAL}{\textsf{dARL}\xspace}

\ldef{ax:diff_var}{$x'$}{\D{x} = \vXp}
\ldef{ax:diff_const}{$c'$}{\D{c} = 0}
\ldef{ax:diff_plus}{$+'$}{\D{\tE + \tG} = \D{\tE} + \D{\tG}}
\ldef{ax:diff_mul}{$\cdot'$}{\D{\tE \cdot \tG} = \D{\tE} \cdot \tG + \tE \cdot \D{\tG}}

\ldef{ax:jacobian}{$\mathsf{J}$}{\D{\ttE} = \jacobian[\vvX][\ttE] \cdot \vvXp}[\vvXp \not\in \ttE]

\ldef{ax:k}{K}{\fBox[\pA][(\fP \Limplies \fQ)] \Limplies (\fBox[\pA][\fP] \Limplies \fBox[\pA][\fQ])}
\ldef{ax:r}{$[\rfle]$}{\pA \rfle \pB \Limplies \left(\fBox[\pB] \Limplies \fBox\right)}[\FV{\fP}\cap\BV{\pA, \pB}\subseteq \vvX]
\ldef{ax:tr}{$\leq^t_{\vvX}$}{\pA \rfle \pB \Land \pB \rfle \pC \Limplies \pA \rfle \pC}
\ldef{ax:dileq}{DI$_\preceq$}{\tE \preceq 0 \Limplies \fBox[\dap{\vvX}{\D{\tE} \leq 0}][\tE \preceq 0]}[\vvXp \not\in \tE]
\ldef{ax:dhc}{DHC}{\D{\tE} = 0 \Limplies \fBox[\dap{\vvX}{\tE = 0}][\D{\tE} = 0]}[\vvXp \not\in \tE]
\ldef{ax:dw}{DW}{\fBox[\dap][\fF]}
\ldef{ax:dc}{DC}{\fBox[\dap][\fR] \Limplies \dap \rfle \dap{\vvX}{\fR}}
\ldef{ax:dr}{DR}{\Lforall{\vvZ} \Lforall{\vvZp} \left(\dap{\vvX,\vvZ}{\fR \Land \fF} \rfle \dap\right)}[\vvZ,\vvZp \notin \fF]
\ldefm{ax:ag}{AG}{
  &\fBox[\dap{\vvX}{\fF}][(\ttE = \ttZero \Land \mdet{\jacobian[\vvXp][\ttE]}\neq 0)]  \Limplies \\
  & \quad \Lexists{\vvZ}\exists{\vvZp}  \left(
    \dap
    \rfle
    \dap{\vvX, \vvZ}{\fF \Land \vvZ = \ttG}
  \right)
}[\vvZ, \vvZp \not\in \dap, \ttG]
\ldefm{ax:dg}{DG}{
  &\fBox[\dap][\mdet{\mA} \neq 0] \Limplies \\
  &\quad \Lforall{\vvZ}\Lexists{\vvZp}\left(\dap \rfle \dap{\vvX,\vvZ}{\fF \Land  \mA \vvZp = \mB\vvZ + \ttC}\right)}[\vvZ, \vvZp \not\in \dap,\mA,\mB,\vvC]
\ldef{ax:dm}{DM}{\dap \rfle \dap{\vvX}{\fG} \Limplies \dap{\vvX}{\fF \Land \fR} \rfle \dap{\vvX}{\fG \Land \fR}}

\ldefm{ax:dp}{DP}{
  &\fBox[\dap{\vvX}{
    \exists \vvYp \exists \vvY \fF
  }][\fF]\Limplies \fBox[\dap{\vvX, \vvY}{\exists \vvYp \exists \vvY\fG}][\fG] \Limplies\\
  &\quad \dap \rfle \dap{\vvX}{\fG} \Limplies \dap{\vvX,\vvY}{\fF} \rfle[\vvX,\vvY]  \dap{\vvX,\vvY}{\fG}
}

\begin{document}
\title{A Deductive Refinement Calculus for Differential-Algebraic Programs}
\author{Jonathan Hellwig \inst{1} \and
        Long Qian \inst{2} \and
        André Platzer \inst{1}      
}
\institute{
  Karlsruhe Institute of Technology, Karlsruhe, Germany \\
  \email{\{jonathan.hellwig, platzer\}@kit.edu} \and
  Carnegie Mellon University, Pittsburgh, USA \\
\email{longq@andrew.cmu.edu}
}
\date{}
\maketitle

\begin{abstract}
This paper presents \emph{differential-algebraic refinement logic} (\dAL) with which one can deductively verify both properties and relations of \emph{differential-algebraic programs} (DAPs) that extend hybrid dynamical systems with differential-algebraic equations (DAEs).
A refinement calculus is introduced that enables the sound comparison of trajectories of differential-algebraic equations, crucially utilizing a novel trace-based semantics. This enables the incremental verification/simplification of complicated DAEs, while ensuring correctness at each step by the soundness of the calculus.
The calculus is shown to be complete for certifying index reductions of DAEs, providing trustworthy syntactic proofs of correctness at each step of the reduction.
\end{abstract}
\begin{keywords}
differential dynamic logic \and differential-algebraic equations \and index reduction
\end{keywords}

\section{Introduction}
Cyber-physical systems, like autonomous vehicles or medical devices, are governed by physical laws and control code whose interaction determines their behavior.
Differential dynamic logic (\dL) \cite{DBLP:journals/jar/Platzer08,DBLP:journals/jar/Platzer17,Platzer10,Platzer18} provides a deductive framework for reasoning about such systems, extending dynamic logic to verify correctness properties of hybrid systems.
However, \dL is built on top of \emph{ordinary differential equations} (ODEs).
Importantly, a large class of physical systems ranging from mechanical multi-body systems to electrical circuits, are naturally modeled by \emph{differential-algebraic equations}.
This calls for a deductive framework beyond \dL.
\emph{Differential-algebraic programs} (DAPs) \cite{DBLP:journals/logcom/Platzer10} extend \dL to differential-algebraic equations, but were originally restricted to a special normal form \cite{DBLP:journals/logcom/Platzer10} or for arbitrary semi-algebraic constraints under the assumption that solutions are real-analytic~\cite{DBLP:conf/cade/HellwigP25}.
This means that even a simple Coulomb friction model, expressed as the DAE $x'=v, m v' = -c |v|$, fall outside the scope of deductive verification. Its solutions are $C^1$, but not real-analytic near $v =0$.

Beyond the question of regularity, DAEs pose a second challenge: \emph{index reduction}, a technique to reveal hidden implicit constraints in DAEs before verification can proceed.
Prior work on the verification of DAEs addresses such issues, but does not support refinement-based reasoning.
Conversely, refinement calculi for hybrid systems have been developed in~\cite{DBLP:conf/lics/LoosP16,DBLP:conf/ijcar/PrebetP24}, enabling formal comparison and simplification of hybrid programs but only for ODEs.

\subsubsection{Contributions.}
This paper introduces \emph{differential-algebraic refinement logic} (\dAL), addressing the two limitations outlined above through the following contributions. First, we introduce a trace-based semantics for differential-algebraic programs grounded in $C^1$ functions. Compared with the real-analytic semantics of prior work, this semantic foundation admits a larger class of practically relevant systems, including Coulomb friction and dynamics involving $\min$ and $|\cdot|$. 
Second, we develop a sound refinement calculus for DAPs. 
This calculus enables relational reasoning between DAPs. In particular, building on this calculus we present a method to algorithmically verify index-reductions.

\section{Differential-Algebraic Refinement Logic}
This section presents the syntax and semantics of differential-algebraic refinement logic (\dAL). 
The logic combines core ideas from differential refinement logic \cite{DBLP:conf/ijcar/PrebetP24,DBLP:conf/lics/LoosP16} and differential-algebraic logic \cite{DBLP:journals/logcom/Platzer10,DBLP:conf/cade/HellwigP25}.
We recall the basic syntax and semantics of differential dynamic logic and extend them with definitions for differential-algebraic programs and refinement, which together provide the foundation for refinement principles introduced in section \ref{sec:proof_calculus}.

\subsection{Syntax.}
This subsection introduces the syntax of \dAL.
The syntax follows largely that of differential dynamic logic \cite{DBLP:journals/jar/Platzer17}, and is recalled here for completeness, with the addition of refinements and differential-algebraic programs. 
We begin with the construction of terms.
Let $\Vars$ denote the set of variables.
Each variable $x \in \Vars$ has an associated \emph{differential variable} denoted by $x'$.
Differential variables are independent of the base variables; in particular, $x'$ is not an abbreviation for the derivative of $x$, but a distinct variable.
For terms, we adopt the standard term language of differential dynamic logic, including \emph{(syntactic) differentials}, defined formally below.
\begin{definition}[Term Syntax]
Let $x \in \Vars$ be a variable and $c \in \Q$ be a rational constant.
The language of terms $\tE, \tG$ is defined by the following grammar:
\[
  \tE, \tG \mathrel{::=} c \mid \vX \mid \tE + \tG \mid \tE \cdot \tG \mid \D{\tE}
\]
\end{definition}
The term language extends real arithmetic with differentials.
Intuitively, the differential $\D{\tE}$ captures the local change of the term $\tE$ under infinitesimal changes of its variables.

Formulas and differential-algebraic programs are defined by simultaneous induction, because programs can occur in formulas and vice versa.
\begin{definition}[Formula Syntax]
Let \(\tE, \tG\) be terms, \(\vvX = (x_1, \dots, x_n)\) be a finite tuple of variables and \(\alpha, \beta\) be differential-algebraic programs.
The language of formulas $\fP, \fQ$ is defined by the following grammar:
\[
  \fP, \fQ \mathrel{::=} \tE \le \tG \mid \Lnot \fP \mid \fP \Land \fQ \mid \Lforall{x} \fP \mid \fBox \mid \blue{\pA \rfle \pB}
\]
\end{definition}

As \dAL is an extension of the first-order logic of real arithmetic, the interpretations/constructions of first-order formulas are standard. The \emph{modal formula} $\fBox$ is inherited from \dL and classical dynamic logic of discrete programs, expressing the property that $\fP$ holds after \emph{every run} of the (differential-algebraic) program $\pA$. The \emph{(partial) refinement formula} $\pA \rfle \pB$ extends the existing refinement notion and is crucial to the axiomatic framework presented in this work, expressing the property that \emph{every possible evolution} of the variables $\vvX$ attained by executing the program $\pA$ can also be achieved by somehow executing the program $\pB$. Intuitively, this implies that the possible behaviors of $\vvX$ under $\pA$ is a subset of the possible behaviors of $\vvX$ under $\pB$, thus $\pA$ is said to be a \emph{(partial) refinement} of $\pB$ with respect to $\vvX$. 
Mutual refinement of programs is denoted by $\pA \rfeq \pB$, abbreviating $\pA \rfle \pB \land \pB \rfle \pA$.

The following definition introduces the syntax of \emph{differential-algebraic programs} (DAPs).
\begin{definition}[Program Syntax]
Let $x$ be a variable, $\tE$ be a term and let $\fQ,\, \fF$ be formulas of real arithmetic.
The language of programs $\pA, \pB$ is defined by the following grammar:
\[
  \pA, \pB \mathrel{::=} \assign \mid\;  \test \mid \blue{\dap} \mid \choice \mid \compose \mid \rep
\]
\end{definition}

Programs in \dAL model hybrid systems by combining discrete dynamics with continuous evolution.
In contrast to \dL, which restricts continuous evolution of $\vvX$ to \emph{ordinary differential equations}, \dAL introduces \emph{differential-algebraic programs} (DAP) which permit evolution governed by arbitrary real-arithmetic formulas over $\vvX$ and $\vvXp$.
This generalization allows modeling of disturbances and implicit dynamics.
An important special case of differential-algebraic programs are \emph{differential-algebraic equations}, which consist of a set of algebraic equations $\tE_1(\vvX,\vvXp) = 0, \dots, \tE_n(\vvX,\vvXp) = 0$. 
Unlike ODEs, such equations are in general not explicitly solvable for $\vvXp$ and need not admit locally unique solutions.

The remaining program constructors characterize discrete program structure and control flow.
Intuitively, programs $\assign$ and $\test$ are discrete and represents \emph{discrete assignments} and \emph{tests}. Programs $\choice,\, \compose, \, \rep$ are program combinators that represent \emph{non-deterministic choice} (either one of $\pA, \pB$ can be executed), \emph{sequential composition} ($\pA$ is executed and then $\pB$ is executed) and \emph{(unbounded) repetition} ($\pA$ can be executed any finite number of times) respectively.

\subsubsection{Notation.}
 To simplify the presentation, we adopt a vector and matrix notation.
We denote vectors of variables by \(\vvX = (\vX_1,\dots,\vX_n)^\top\), vectors of terms by \(\ttE = (\tE_1,\dots,\tE_n)^\top\), and matrices of terms by \(\mA = (a_{ij})\).
The matrix-vector product \(\mA \ttE\) is an abbreviation for a term vector whose \(i\)-th component is the \(i\)-ith coordinate of the matrix product \((\mA \ttE)_i = \sum_{j=1}a_{ij}\tE_j\).
Relational comparison for vectors $\ttE, \ttG$ of the form \(\ttE \leq \ttG\) abbreviate the conjunction \(\bigwedge_{i=1}^n \tE_i \leq \tG_i\).

\subsubsection{Expressiveness of Differential-Algebraic Programs.}
  Explicit ordinary differential equations with domain constraints, as used in \dL, are subsumed as a special case of the following DAP:
  \[
    \dap{\vvX}{\vvXp = \ttF(\vvX) \land \fQ}.
  \]
  DAPs also allow a much broader class of systems, including, for example, linear differential-algebraic equations
  \[
  \dap{\vvX}{\mA \vvXp = \vvB},
  \]
  which in general cannot be reduced to ordinary differential equations when the matrix $\mA$ is not invertible. 
  More generally, DAPs admit arbitrary semi-algebraic constraints over state and differential variables, for example
  \[
   \dap{\vvX}{\ttG(\vvX,\vvXp) \ge 0 \land \ttH(\vvX,\vvXp) = 0 \land \ttR(\vvX,\vvXp) > 0}.
  \]

\subsection{Semantics.}
Prior work on DAPs defines the semantics of programs via state transition relations \cite{DBLP:journals/jar/Platzer08,DBLP:conf/cade/HellwigP25}, representing executions via initial and final state pairs.
Such semantics are insufficient in our setting, where reasoning about intermediate states is essential, rather than just their end states.
We therefore adopt a trace-based semantics, interpreting programs as sets of continuous functions from time to states.
This perspective aligns naturally with our focus of continuous dynamics in this work.

A \emph{state} \(\sO : \Vars \mapsto \R\) is a mapping from variables to real values. 
In particular, states also assign values to differential variables.
We write \(\States\) for the set of all states and \(\sO(\vX)\) for the value of variable \(\vX\) in state \(\sO\).
The semantics of terms are standard \cite{DBLP:journals/jar/Platzer08,DBLP:conf/lics/Platzer12a}.
For a real value $r \in \R$, we use the notation $\stateUpd$ to denote the state obtained by updating $\sO$ at the variable $x$ with the value $r$.

\begin{definition}
  The \emph{semantics} of a term $\tE$ is mapping $\sem{\tE} : \States \rightarrow \R$ defined inductively:
  \begin{enumerate}
    \item $\semS{\vX}{\sO} = \sO(\vX)$,
    \item $\semS{\tE + \tG}{\sO} = \semS{\tE}{\sO} + \semS{\tG}{\sO}$,
    \item $\semS{\tE \cdot \tG}{\sO} = \semS{\tE}{\sO} \cdot \semS{\tG}{\sO}$,
    \item $\semS{\D{\tE}}{\sO} = \sum_{v \in \Vars}  \sO(v') \frac{\partial}{\partial  r} \semS{\tE}{\stateUpd[\sO][v]}$.
  \end{enumerate}
\end{definition}
Definitions $(1) - (3)$ are standard, the valuation of differential term $\semS{\D{\tE}}{\sO}$ evaluates the differential of $\D{\tE}$ using the chain rule, giving $\tE' = \sum_{v \in \Vars} v' \frac{\partial \tE}{\partial v}$, and evaluating the expression at the state $\omega \in \States$ yields definition $(4)$. 
For notational convenience,  we extend the valuation componentwise to vectors and matrices of terms:
\[
  \semS{\ttE}{\sO} = (\semS{\tE_i}{\sO})_i, \quad \semS{\mA}{\sO} = (\semS{a_{ij}}{\sO})_{ij}.
\]

We now define the core semantic objects of our programs:
A \emph{trace} \(\flA\) of duration \(\flTA \geq 0\) is a mapping \(\flA :\flInt \rightarrow \States\).
We denote the set of traces by \(\flSpace = \bigcup_{T \geq 0} \States^{[0,T]}\).
The valuation of a variable vector \(\vvX\) at time \(t\) is understood pointwise as \(\flAt(\vvX) = (\flAt(\vX_1), \dots, \flAt(\vX_n))^\top\).

A key assumption in earlier works axiomatizing continuous evolutions is that such traces are real-analytic functions in the time variable $t$ \cite{DBLP:journals/jacm/PlatzerT20,DBLP:conf/cade/HellwigP25}, which makes it impossible to switch between disjuncts with different differential-algebraic equations.
This work considers a much more general collection of traces by relaxing such regularity requirements.
\begin{definition} \label{def:x-regular}
  A trace \(\flA \in \flSpace\) is called \(\vvX\)\emph{-regular}, if
  \begin{enumerate}
    \item \(t \mapsto \flAt(\vvX)\) is of class \(C^1(\flInt,\R^n)\),
    \item \(\ddt \flAt(\vvX) = \flAt(\vvXp)\) for all \(t \in \flInt\),
    \item for all \(\vY \not\in \vvX \cup \vvXp\) and all \(t \in \flInt\), we have \(\flAt(\vY) = \flA(0)(\vY)\).
  \end{enumerate}
  The set of \(\vvX\)-regular traces is denoted by \(\flRegular[\vvX]\).
\end{definition}
\begin{remark}
  The derivatives at the boundary points \(\{0,\flTA\}\) are defined as the one-sided limits from within the interval \(\flInt\), and the duration zero trace with $T_\flA = 0, \flA(0)(\vvX) = \vvX$ is understood to be $\vvX$-regular. 
\end{remark}

The first property restricts the regularity of traces to the class of $C^1$ functions.
Definition~\ref{def:x-regular} uses a closed interval, which means it excludes traces whose first derivative blows up as it approaches the boundary.
The second property requires \emph{differential consistency}:
the value of each differential variable must be the time derivative of its corresponding state variable.
The third property constrains the bound variables of DAPs: only the variables in $\vvX \cup \vvXp$ may change along the trace, while all others must remain constant.

A central feature of $\dL$ is its integration of both discrete and continuous programs, which can be combined using sequential composition. 
While sequential composition for relational semantics can be defined easily via the composition of reachability relations, a trace-based approach requires more careful treatment to ensure a well-defined (single-valued) trace is obtained rather than a general relation. 
To this end, we define the \emph{concatenation operator} \(\cdot : \flSpace \times \flSpace \rightarrow \flSpace\) such that for \(\flA, \flB \in \flSpace\), the concatenated trace \(\flConcat{\flA}{\flB}\) is defined on \([0,\flTA + \flTB]\) as
\begin{align*}
  (\flConcat{\flA}{\flB})(t) = \begin{cases}
    \flAt & \text{ for } t \in [0,\flTA), \\
    \flB(t - \flTA) & \text{ for } t \in [\flTA,\flTA + \flTB].
  \end{cases}
\end{align*}
Note that this operation is defined for any two traces.
If the transition states do not match, the trace exhibits an instantaneous jump at \(\flTA\).
Consequently, the concatenation of continuous traces is, in general, only a right-continuous trace.
For the discrete program constructs, we define the point trace at state \(\sO\), denoted \(\flPoint\). 
It is the instantaneous trace with duration \(T_{\flPoint} = 0\) and \(\flPoint(0) = \sO\).

The preceding definitions provide the necessary foundation to characterize the semantics of programs:
\begin{definition}[Program semantics]
  \label{def:prog_sem}
  The \emph{semantics} of a program \(\pA\) is a mapping \(\sem{\pA} : \States \rightarrow 2^{\flSpace}\), defined inductively:
  \begin{enumerate}
    \item \(\sem{\assign} = \sO \mapsto \{\flPoint[\stateUpd[\sO][\vX][\semS{\tE}{\sO}]]\},\)
    \item \(\sem{\test} = \sO \mapsto \{\delta_{\sO}\mid \sO \in \sem{\fQ}\},\)
    \item \(\sem{\dap} = \sO \mapsto \{\flA \in \flRegular \mid \flA(0) = \sO, \flA(\flInt) \subseteq \sem{\fF}\},\)
    \item \(
    \sem{\compose{\pA}{\pB}} = \sO \mapsto \{\flConcat{\flA}{\flB} \mid \flA \in \sem{\pA}(\sO), \;\flB \in \sem{\pB}(\flAT)
    \},\)
    \item \(\sem{\choice{\pA}{\pB}} = \sO \mapsto \sem{\pA}(\sO) \cup \sem{\pB}(\sO),\)
    \item \(\sem{\rep{\pA}}] = \sO \mapsto \bigcup_{n=0}^\infty \sem{\pA^n}(\sO)\),
  \end{enumerate}
  where \(\pA^0 \equiv{} \test{\top}\) and \(\pA^{n+1} \equiv \compose{\pA}{\pA^{n}}\).
\end{definition}

The semantics of programs is defined as a mapping from initial states to sets of traces.
For a given state $\sO \in \States$, the notation $\semPS$ denotes the set of traces originating in $\sO$.
This formulation is necessary because continuous and discrete programs treat initial states differently: continuous programs evolve from the initial state, while discrete assignment instantaneously transition to a new state.
An important property of the semantics of DAPs is that traces respect the entire initial state, including values of differential variables $\vvXp$.
This is crucial for ensuring the soundness of our proof calculus.

We say two traces \(\flA\) and \(\flB\) \emph{coincide} on a set of variables \(V\), denoted by the \emph{coincidence relation} \( \coincide[V]\), if they have the same duration \(\flTA = \flTB\) and agree on the valuation of all variables (and their differential variables) in \(V\) for all \(t \in \flInt\).
\begin{definition}[Formula semantics]
  The semantics of a formula \(\fP\) is a set of states \(\sem{\fP} \subseteq \States\) defined inductively:
  \begin{enumerate}
    \item \(\sem{\tE \leq \tG} = \{\sO \in \States \mid \semS{\tE}{\sO} \leq \semS{\tG}{\sO}\},\)
    \item \(\sem{\fP \Land \fQ} = \sem{\fP} \cap \sem{\fQ},\)
    \item \(\sem{\Lnot \fP} = \States \setminus \sem{\fP},\)
    \item \(\sem{\Lforall{\vX}\fP} =  \{\sO \in \States \mid \forall r \in \R : \stateUpd \in \sem{\fP}\},\)
    \item \(\sem{\fBox} = \{\sO \in \States \mid \forall \flA \in \semPS[\pA] : \flAT \in \sem{\fP}\},\)
    \item \(\sem{\pA \rfle \pB} = \{\sO \in \States \mid \forall \flA \in \semPS[\pA]~\exists \flB \in \semPS[\pB]: \coincide\}.\)
  \end{enumerate}
\end{definition}
The last clause defines the semantics of the (partial) refinement formula. A state $\sO$ satisfies $\pA \rfle \pB$, if for every trace $\flA$ of program $\pA$ starting in state $\sO$, there exists a corresponding trace $\flB$ in program $\pB$, such that $\flA$ and $\flB$ coincide on the tuple $\vvX$.

Although this work focuses on the continuous fragment of programs, redefining the semantics of \dL requires ensuring compatibility with its original interpretation. 
The following proposition establishes that this is indeed the case.
\begin{proposition}\label{prop: conservative_extension}
  On the syntactic fragment of \dL, the trace-based semantics and relational semantics are equivalent with respect to reachability: for any program \(\pA\) and a pair of states \((\sO, \sN) \in \States \times \States\), we have
  \((\sO,\sN) \in \semdL{\pA}\), if and only if, there exists a trace \(\flA \in \semPS\) with terminal state \(\flAT = \sN\).
\end{proposition}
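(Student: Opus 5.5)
The plan is structural induction on the program $\pA$ in the \dL fragment, i.e.\ programs built from assignments, tests, ODEs with domain constraints $\dap{\vvX}{\vvXp = \ttF(\vvX) \land \fQ}$, choice, composition and repetition. We prove both directions at once: $(\sO,\sN) \in \semdL{\pA}$ iff some $\flA \in \semPS$ has $\flAT = \sN$. Since formulas in the \dL fragment contain only \dL programs, the formula semantics agree by a simultaneous induction.

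\textbf{Atomic cases.} For assignments and tests the point traces $\flPoint$ have terminal state exactly the relational post-state, so these cases are immediate.

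\textbf{Compound cases.} For choice we take the union on both sides. For composition we use that the terminal state of $\flConcat{\flA}{\flB}$ is $\flB(\flTB)$, and that $\flB$ starts at $\flAT$. Then $(\sO,\mu)\in\semdL{\pA}$ and $(\mu,\sN)\in\semdL{\pB}$ correspond exactly to choosing $\flA$ ending in $\mu$ and $\flB \in \sem{\pB}(\mu)$ ending in $\sN$. One small point must be checked: when $\flTB = 0$ the concatenation's value at $\flTA$ is $\flB(0)$, so the terminal state is still correct. Repetition follows by an inner induction on $n$ for $\pA^n$, using that $\pA^0 = \test{\top}$ and that relational repetition is the union of the iterates.

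\textbf{ODE case (main obstacle).} Recall the relational semantics. It contains $(\sO,\sN)$ iff there is a solution $\varphi:[0,r]\to\States$ with $\varphi(0)=\sO$ on variables other than $\vvXp$, $\varphi(r)=\sN$, and $\varphi$ satisfying the ODE and $\fQ$ throughout, where $\varphi(t)(\vvXp)=\ttF(\varphi(t)(\vvX))$. Other variables are held constant. For $r=0$ the final state has $\vvXp$ set to $\ttF$.

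The discrepancy lies in how $\vvXp$ is treated initially. The trace semantics keeps $\flA(0)=\sO$, including $\sO(\vvXp)$, whereas the \dL convention overwrites $\vvXp$.

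\emph{From trace to relational.} Given $\flA \in \flRegular$ with $\flA(0)=\sO$ satisfying the formula, the map $t\mapsto \flAt(\vvX)$ is $C^1$ and its derivative equals $\flAt(\vvXp)=\ttF(\flAt(\vvX))$. So $\flA$ is a \dL solution, and it agrees with $\sO$ at time $0$.

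\emph{From relational to trace.} A \dL solution $\varphi$ from $\sO$ is $C^1$, in fact smooth since $\ttF$ is polynomial. Its derivative equals $\varphi(t)(\vvXp)$ pointwise, and it satisfies $\fQ$, so it is $\vvX$-regular provided $\varphi(0)(\vvXp)=\sO(\vvXp)$. This holds when $\sO$ itself satisfies $\vvXp=\ttF(\vvX)$, which is forced for any nonempty trace set since $\flA(0)=\sO$ must satisfy the formula.

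To reconcile the general case, I would use the fact that the \dL relational semantics of ODEs also requires the initial state to satisfy the evolution domain, with differential variables assigned consistently. Concretely, I would fix the convention (as in \cite{DBLP:journals/jar/Platzer17}) that $\sO$ agrees with $\varphi(0)$ on $\vvXp$ as well, when the domain mentions $\vvXp$.

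I expect the real work to be this bookkeeping on differential variables and on the duration-zero trace. I would handle it by stating the equivalence modulo the \dL convention that the ODE $\dap{\vvX}{\vvXp=\ttF(\vvX)\land\fQ}$ is interpreted with the differential-consistency constraint included in the formula. Under that convention both semantics require the same condition on $\sO(\vvXp)$, and the two sets of endpoints coincide.
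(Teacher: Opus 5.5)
Your cases for assignment, test, choice, composition and repetition follow the same structural induction as the paper's appendix. The paper's proof, however, has no case for differential equations. That is exactly where your argument has a genuine gap, and the gap cannot be closed, because the discrepancy you noticed refutes the statement as written.

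Under the semantics of \cite{DBLP:journals/jar/Platzer17}, which you invoke, a solution $\varphi$ only has to agree with $\omega$ at time $0$ on all variables except $x'$. The initial value of $x'$ is ignored and overwritten. The semantics of \cite{DBLP:journals/jar/Platzer08}, cited in the appendix, does not constrain $x'$ either. Take $\{x : x'=2\}$ and $\omega(x')=1$. Then dL has the transitions $(\omega,\ \omega[x\mapsto\omega(x)+2r,\ x'\mapsto 2])$ for every $r\ge 0$. The trace set from $\omega$, however, is empty, since every trace must start in $\omega\notin[\![x'=2]\!]$. So your trace-to-transition direction is correct, but the converse fails.

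Your attempted repairs do not help:
\begin{itemize}
\item ``Forced for any nonempty trace set'' is circular in this direction, because nonemptiness is precisely what you must establish.
\item \cite{DBLP:journals/jar/Platzer17} does not adopt the convention you attribute to it. It deliberately lets initial values of $x'$ be overwritten, to accommodate mismatches such as $x':=1;\ x'=2$.
\item No precondition on the initial state of the whole program can rescue the claim. The program $\{x : x'=1\};\ \{x : x'=2\}$ has dL transitions from every state but no trace from any state, because the first evolution ends with $x'=1$ and the second requires $x'=2$ at its start.
\end{itemize}

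So the ODE case can only be proved for a \emph{modified} statement, and you must make the modification explicit rather than present it as a property of dL. There are two options.
\begin{itemize}
\item Keep the standard dL semantics and change the embedding. Translate a dL ODE into $x':=f(x);\ \{x : x'=f(x)\land Q\}$ (componentwise; $f$ contains no differential variables). A dL solution then starts exactly in the post-state of this differential assignment and is an $x$-regular trace from there. Your two ODE arguments then go through verbatim, and the discrete cases are unchanged.
\item Adopt your per-ODE requirement $\omega(x')=f(\omega(x))$ explicitly as a change to the dL transition semantics. Since it is local to each ODE, it is compatible with the induction. The proposition then compares dARL with a nonstandard variant of dL.
\end{itemize}
The paper's proof, which silently omits this case, is subject to the same caveat.
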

This proposition establishes that \dAL is an \emph{extension} of \dL, the proof can be found in Appendix \ref{sec: appendix_conservative_extension}. In particular, all axioms for discrete programs remain valid.
\section{Axioms and Rules of \dAL} \label{sec:proof_calculus}
This section develops a sound proof calculus for \dAL. 
This calculus combines the axiomatic treatment of differential terms with additional principles for refinements.

\subsubsection{Syntactic Side Conditions.}
Syntactic side conditions are essential for the soundness of our proof calculus.
We use standard definitions \cite{DBLP:journals/jar/Platzer17} and write \(\FV{\fP}\) for the set of (syntactically) free variables of formula $\fP$ and \(\BV{\pA}\) for the set of (syntactically) bound variables of program $\pA$.
For DAPs, we set \(\BV{\dap} = \vvX \cup \vvXp\).
The following lemmas formally relate syntactically free and bound variables to our trace-based semantics.

\begin{lemma}[Coincidence]\label{lem:coincidence}
  Let \(\fP\) be a formula and \(\sO,\sN \in \States\) be states.
  If \(\coincide[V][\sO][\sN]\) for some variable set \(\FV{\fP} \subseteq V\), then 
  \(\sO \in \sem{\fP}\), if and only if, \(\sN \in \sem{\fP}.\)
\end{lemma}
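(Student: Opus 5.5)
The plan is a structural induction on $\fP$, proved simultaneously with two auxiliary statements. The first is a term coincidence property: if $\coincide[V][\sO][\sN]$ with $\FV{\tE} \subseteq V$, then $\semS{\tE}{\sO} = \semS{\tE}{\sN}$. The second is a program coincidence property, stated on traces. Suppose $\sO$ and $\sN$ agree on a set $V \supseteq \FV{\pA}$. Then for every $\flA \in \semPS[\pA]$ there is a $\flB \in \sem{\pA}(\sN)$ with $\flTA = \flTB$ such that $\flA$ and $\flB$ agree on $V$ (and on the differential variables of $V$) at all times $t$. Moreover, outside $\BV{\pA}$ each trace keeps the values of its own initial state.

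The simultaneous induction is needed because formulas contain programs (through $\fBox$ and $\pA \rfle \pB$), and programs contain formulas (through tests and DAP constraints). For terms, the induction is immediate on the grammar. For the differential case $\D{\tE}$, I use that $\FV{\D{\tE}}$ contains both $\FV{\tE}$ and the corresponding differential variables. By the induction hypothesis the function $r \mapsto \semS{\tE}{\stateUpd[\sO][v]}$ agrees with the one at $\sN$ for each relevant $v$, so the partial derivatives agree. For $v \notin \FV{\tE}$ the partial derivative is zero.

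The formula cases for $\le$, $\Lnot$ and $\Land$ are immediate. For $\Lforall{x}\fP$, note that the updated states $\stateUpd[\sO][x]$ and $\stateUpd[\sN][x]$ agree on $V \cup \{x\}$, which contains $\FV{\fP}$.

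For $\fBox$, take any trace of $\pA$ from $\sN$. The program coincidence property, applied symmetrically, gives a matching trace from $\sO$, and the final states of the two traces agree on $V$. This needs the usual bookkeeping: a variable in $V \setminus \BV{\pA}$ is unchanged along each trace, so it agrees at the end because it agreed initially. Since $\FV{\fBox} = \FV{\pA} \cup (\FV{\fP} \setminus \BV{\pA})$, the induction hypothesis for $\fP$ then applies.

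For the refinement formula, $\FV{\pA \rfle \pB}$ contains $\FV{\pA}$, $\FV{\pB}$ and $\vvX$. Let $\flA \in \sem{\pA}(\sN)$. Transport it to a trace $\tilde\flA \in \semPS[\pA]$ agreeing with $\flA$ on $V \supseteq \vvX$. Use the hypothesis at $\sO$ to obtain $\tilde\flB \in \semPS[\pB]$ with $\tilde\flA$ and $\tilde\flB$ coinciding on $\vvX$. Transport $\tilde\flB$ back to a trace $\flB \in \sem{\pB}(\sN)$ agreeing with it on $V$. The durations all match, so $\flA$ and $\flB$ coincide on $\vvX$, as required.

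The program coincidence property is proved by induction on $\pA$. Assignments and tests use the term and formula hypotheses. Choice and repetition follow directly. For sequential composition, the final states of the first parts agree on $V$, or on $V \cup \{x\}$ after an assignment, which suffices for the second part. The intermediate sets must be chosen as in the standard \dL proof \cite{DBLP:journals/jar/Platzer17}, replacing $V$ by $V \cup \BV{\pA}$ where necessary.

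\textbf{The main obstacle is the DAP case $\dap$.} Given $\flA \in \semPS$, I define $\flB(t)$ to equal $\flA(t)$ on $\vvX \cup \vvXp$ and to equal $\sN$ on all other variables. Then $\flB(0) = \sN$ requires that $\sO$ and $\sN$ agree on $\vvX \cup \vvXp$. This holds because the trace semantics respects the initial values of $\vvXp$, so $\FV{\dap}$ must include $\vvX$, $\vvXp$ and $\FV{\fF}$; I would check this against the paper's convention for free variables. The trace $\flB$ is $\vvX$-regular, since its $\vvX \cup \vvXp$ part is that of $\flA$ and its other components are constant. At each time $t$, the states $\flA(t)$ and $\flB(t)$ agree on $\FV{\fF} \cup \vvX \cup \vvXp$, because the non-bound variables are constant along both traces and initially agree. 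The formula induction hypothesis for $\fF$ then gives $\flB(t) \in \sem{\fF}$, and agreement on $V$ holds at all times.
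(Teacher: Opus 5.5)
\paragraph{Gap: the box and composition cases.}
Your overall architecture is the expected one. The paper gives no proof of this lemma and only points to the standard \textsf{dL} definitions, so it implicitly relies on exactly such a simultaneous induction. Your DAP case is also correct, including the point that $\vvXp$ must be free in $\{\vvX : F\}$. For instance, $[\{x : x \ge 0\}]\,x \le 0$ holds at $x=0$ when $x'=-1$, since only the duration-zero trace exists, but fails when $x'=1$.

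The genuine gap is in the box case, and it recurs in sequential composition. You take $\mathrm{FV}([\alpha]P)=\mathrm{FV}(\alpha)\cup(\mathrm{FV}(P)\setminus\mathrm{BV}(\alpha))$ and conclude from ``the final states agree on $V$'' that the induction hypothesis for $P$ applies. It does not. $V$ is only guaranteed to contain $\mathrm{FV}(P)\setminus\mathrm{BV}(\alpha)$, and your bookkeeping remark handles only variables outside $\mathrm{BV}(\alpha)$. Nothing controls the final values of variables in $\mathrm{FV}(P)\cap\mathrm{BV}(\alpha)$ that lie outside $V$. With your definition the lemma is in fact false: $[x:=1\cup ?\top]\,x=1$ would have no free variables, yet it holds exactly in the states with $x=1$.

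The same example defeats your repair for composition, ``replacing $V$ by $V\cup\mathrm{BV}(\alpha)$''. Take $V=\emptyset\supseteq\mathrm{FV}(x:=1\cup ?\top)$ and start states with $x=0$ and $x=5$. The $?\top$-trace from the first has no counterpart from the second whose final state agrees on $x$.

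\paragraph{The missing ingredient.}
What is missing is the set of \emph{must-bound} variables $\mathrm{MBV}(\alpha)\subseteq\mathrm{BV}(\alpha)$, those written on every run, from the standard development. Define $\mathrm{FV}([\alpha]P)=\mathrm{FV}(\alpha)\cup(\mathrm{FV}(P)\setminus\mathrm{MBV}(\alpha))$ and $\mathrm{FV}(\alpha;\beta)=\mathrm{FV}(\alpha)\cup(\mathrm{FV}(\beta)\setminus\mathrm{MBV}(\alpha))$. Then strengthen the program property: the two traces agree on $V$ at all times \emph{and} their final states agree on $V\cup\mathrm{MBV}(\alpha)$.

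This extra clause must be carried through every program case. Take $\{x\}$ for $x:=e$, $\emptyset$ for tests and repetition, the intersection for choice and the union for composition. DAPs need nothing new, since $\vvX\cup\vvXp\subseteq V$ under your convention. With this strengthening your box and composition arguments go through as sketched. Alternatively, the coarser but sound clauses $\mathrm{FV}(\alpha)\cup\mathrm{FV}(P)$ and $\mathrm{FV}(\alpha)\cup\mathrm{FV}(\beta)$ make your weaker property suffice.

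\paragraph{A smaller point.}
Decide once whether ``agree on $V$'' includes the differential variables of $V$. Your term and $\forall$ cases use plain agreement; under the primed reading, $\omega_x^r$ and $\nu_x^r$ need not agree on $x'$. The parenthetical in your program property, however, needs the primed reading. It fails for $?\top$ started from two states that differ only in $x'$. The clean route is to prove the plain version and put $\vvXp$ into the free variables of $\alpha\le_{\vvX}\beta$.
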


\begin{lemma}[Bound effect]
  For any program \(\pA\) and trace \(\flA \in \semPS\), we have \(\sO(v) = \flAT(v)\) for all \(v \in \BV{\pA}^\complement\).
\end{lemma}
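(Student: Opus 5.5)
We prove the bound effect lemma by structural induction on the program \(\pA\), showing that for every state \(\sO\) and every trace \(\flA \in \semPS\), the terminal state \(\flAT\) agrees with \(\sO\) on \(\BV{\pA}^\complement\). We use the standard syntactic bound-variable clauses: \(\BV{\assign} = \{x\}\), \(\BV{\test} = \emptyset\), \(\BV{\dap} = \vvX \cup \vvXp\), \(\BV{\choice{\pA}{\pB}} = \BV{\compose{\pA}{\pB}} = \BV{\pA}\cup\BV{\pB}\), and \(\BV{\rep{\pA}} = \BV{\pA}\).

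The atomic cases are immediate from Definition~\ref{def:prog_sem}.
\begin{itemize}
\item \emph{Assignment.} The only trace is the point trace at \(\stateUpd[\sO][\vX][\semS{\tE}{\sO}]\), which differs from \(\sO\) only at \(x\).
\item \emph{Test.} The only trace is \(\delta_\sO\) itself, so nothing changes.
\item \emph{DAP.} Every trace is \(\vvX\)-regular with \(\flA(0)=\sO\). Condition~3 of Definition~\ref{def:x-regular} gives \(\flAt(y) = \flA(0)(y) = \sO(y)\) for all \(y \notin \vvX\cup\vvXp\) and all \(t\), in particular at \(t=\flTA\). The duration-zero trace is covered trivially.
\end{itemize}

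\emph{Choice.} A trace of \(\choice{\pA}{\pB}\) lies in \(\semPS[\pA]\) or \(\semPS[\pB]\). The induction hypothesis for that branch gives agreement on the complement of its bound variables, which contains \((\BV{\pA}\cup\BV{\pB})^\complement\).

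\emph{Sequential composition.} This is the one case needing care with the trace semantics, namely the terminal state of a concatenation. A trace has the form \(\flConcat{\flA}{\flB}\) with \(\flA\in\semPS[\pA]\) and \(\flB \in \sem{\pB}(\flAT)\). The terminal time is \(\flTA+\flTB\), which lies in the second branch of the concatenation definition, so the terminal state equals \(\flB(\flTB)\). The same holds when \(\flTB = 0\), since the second branch covers the closed endpoint.

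For \(v\notin \BV{\pA}\cup\BV{\pB}\), the induction hypothesis for \(\pB\), applied at the initial state \(\flAT\), gives \(\flB(\flTB)(v) = \flAT(v)\). The induction hypothesis for \(\pA\) then gives \(\flAT(v) = \sO(v)\), which closes the case.

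\emph{Repetition.} We prove by an inner induction on \(n\) that \(\BV{\pA^n}\subseteq \BV{\pA}\) and that the claim holds for each \(\pA^n\).
\begin{itemize}
\item For \(n = 0\), \(\pA^0 \equiv \test{\top}\), which is the test case.
\item For the step, \(\pA^{n+1}\equiv \compose{\pA}{\pA^n}\), and we reuse the composition argument with the outer hypothesis for \(\pA\) and the inner hypothesis for \(\pA^n\).
\end{itemize}
Since \(\sem{\rep{\pA}}(\sO)\) is the union of the \(\sem{\pA^n}(\sO)\), the claim follows for \(\rep{\pA}\).

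The main obstacle is modest: verifying that the concatenation indeed ends in \(\flB(\flTB)\), as done in the composition case. The repetition case also needs the inner induction, because \(\pA^n\) is not a syntactic subterm of \(\rep{\pA}\).
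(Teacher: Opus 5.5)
Your proof is correct. The paper states the Bound Effect Lemma without proof, relying on the standard argument from differential dynamic logic, and your structural induction is essentially that argument transferred to the trace semantics, including the check that a concatenation ends in $\flB(\flTB)$ and the inner induction over $\pA^n$ for repetition.

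The soundness proof of \lref{ax:r} invokes the lemma pointwise, as $\flAt(v)=\sO(v)$ for all $t\in\flInt$. Your induction also yields this if you carry the claim through both segments $[0,\flTA)$ and $[\flTA,\flTA+\flTB]$ of the concatenation. Only your terminal-state version is actually needed there, though.
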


The coincidence lemma states that if two states agree on a superset of a formula's free variables, then they agree on whether the formula is satisfied.
The bound effect lemma expresses that a program execution can only modify the values of its bound variables, leaving all other variables unchanged along the trace.

\subsubsection{Differential Term Axioms.}
Differential terms are syntactically axiomatized by the differential
axioms for variables, constants, sums, and products \cite{DBLP:journals/jar/Platzer17}:
\begin{align*}
  \axtag{ax:diff_var} & \quad \lquote{ax:diff_var} \qquad \axtag{ax:diff_plus} \quad \lquote{ax:diff_plus}\\
  \axtag{ax:diff_const} & \quad \lquote{ax:diff_const} \qquad\quad \axtag{ax:diff_mul} \quad \lquote{ax:diff_mul} 
\end{align*}

Because our calculus is based on $C^1$ traces, rather than real-analytic ones, some of our reasoning principles require additional regularity assumptions.
To state these assumptions syntactically, we define syntactic Jacobians of term vectors.

\begin{definition}[Syntactic partial derivative and Jacobian]\label{def:partial_diff}
Let $\vX$ and $\vY$ be variables and $c$ a rational constant.
For terms $\tE,\tG$
the syntactic partial derivative $\sPart{(\tE)}{\vX}$ is defined by
\begin{align*}
  \sPart{(\tE+\tG)}{\vX}&\equiv\sPart{(\tE)}{\vX}+\sPart{(\tG)}{\vX} \qquad\qquad \sPart{(c)}{\vX} \equiv 0\\
  \sPart{(\tE \cdot \tG)}{\vX}&\equiv\sPart{(\tE)}{\vX} \tG + \tE \sPart{(\tG)}{\vX} \qquad\quad \sPart{(\vY)}{\vX}\equiv\begin{cases}1&y=x\\0&y\neq x\end{cases}
\end{align*}

Let $\vvX = (x_1,\dots,x_n)$ be a tuple of variables and let
$\ttE = (\tE_1,\dots,\tE_m)$ be a vector of terms.
The \emph{syntactic Jacobian} of $\ttE$ with respect to $\vvX$ is the $m \times n$ matrix \(\jacobian[\vvX][\ttE]\)  defined by:
\[
\jacobian[\vvX][\ttE]
\equiv \left[\sPart{\tE_i}{x_j}\right]_{i,j}
\]
\end{definition}
The axiom schema \lref{ax:jacobian} expresses the consistency between syntactic differentials and syntactic Jacobians. It is only valid if $\vvXp$ is not free in the term vector $\ttE$:
\[
\begin{aligned}
  \axtag{ax:jacobian} \quad & \lquote{ax:jacobian} \quad \lside{ax:jacobian}
\end{aligned}
\]

\subsubsection{Refinement Axioms.}
The following two \emph{refinement axioms} presented below are generalizations of the $[\leq]$ and $\leq_t$ axioms \cite{DBLP:conf/ijcar/PrebetP24} to the partial refinement case:
\begin{align*}
  \axtag{ax:r} & \quad \lquote{ax:r} \quad \lside{ax:r}\\
  \axtag{ax:tr} & \quad \lquote{ax:tr}
\end{align*}
The \lref{ax:r} axiom builds a bridge between box and refinement formulas. It is only valid under the stated side condition, which requires that any free variable of $\fP$ outside $\vvX$ may not be modified by either of programs.
Axiom \lref{ax:tr} states the usual transitivity property of refinement.

\subsubsection{Box Axioms.}
The following axioms express basic reasoning principles for box modalities of DAEs.
Although they follow from elementary properties of $\vvX$-regular traces, these axioms are sufficiently expressive to derive non-trivial properties of DAPs (c.f. Section \ref{sec:index_reduction}). We write $\preceq$ to denote either $<$ or $\leq$:
\begin{align*}
  \axtag{ax:dw} & \quad \lquote{ax:dw} \\
  \axtag{ax:dileq} & \quad \lquote{ax:dileq} & \lside{ax:dileq} \\
  \axtag{ax:dhc} & \quad \lquote{ax:dhc} & \lside{ax:dhc}
\end{align*}
The \emph{differential weakening} axiom \lref{ax:dw} axiomatizes the simple property that if $F$ is a constraint for the system $\dap$, then $F$ must be satisfied along all traces.
Axiom \lref{ax:dileq} is the \emph{differential induction} axiom, expressing the fact that if a differential-free term $\tE$ is initially (negative) non-positive and along all possible traces its differential is non-positive, then $\tE$ always remains (negative) non-positive.
Finally, axiom \lref{ax:dhc} is the \emph{differential hidden constraint} axiom. 
It can be viewed as the converse of the \lref{ax:dileq} axiom: if a differential-free term $e$ is identically zero on the entire time horizon, then necessarily its differential $\D{\tE}$ must also be zero.
The role of axiom \lref{ax:dhc} is discussed in Section~\ref{sec:index_reduction}.

\subsubsection{Ghost Axioms.}
The \emph{ghost axioms} provide reasoning principles for extending DAPs with additional variables while preserving refinement.
Under our $C^1$ trace-based semantics, such extensions require regularity assumptions to ensure that the added variables admit a solution on the same time interval of existence as the original trace.
To state these conditions, we first define a syntactic determinant.

\begin{definition}[Syntactic determinant]
  \label{def:det}
Let $\mA=(a_{ij})$ be an $n\times n$ matrix whose entries are terms.
We define the term $\mdet(\mA)$ by recursion on $n$:
\[
\mdet([a_{11}]) \equiv a_{11},
\qquad
\mdet(\mA) \equiv \sum_{j=1}^n (-1)^{1+j} a_{1j}\, \mdet(\mA_{1j}) \quad (n>1),
\]
where $\mA_{1j}$ is obtained from $\mA$ by deleting row $1$ and column $j$.
\end{definition}
Using this notation, we state the two extension principles:
\begin{align*}
  \hspace{-2cm}
  \axtag{ax:ag} & \quad \lquote{ax:ag}
                & \lside{ax:ag}\\
  \axtag{ax:dg} & \quad \lquote{ax:dg} & {\hspace{-2cm}\lside{ax:dg}}
\end{align*}
Axiom \lref{ax:ag} is the \emph{algebraic ghost} axiom, axiomatizing the fact that if the Jacobian associated to an algebraic equation $\ttE = \ttZero$ is invertible, then the system is actually explicit and the differentials $\vvXp$ can be solved \emph{explicitly} from the equation $\ttE = \ttZero$. This allows for the introduction of new variables without losing regularity. In particular, one can introduce constrained variables $\vvZ = \vvXp$ (using $\ttG \equiv \vvXp$) and manipulate $\vvZ$ as a differential-free variable.
Axiom \lref{ax:dg} (\emph{differential ghost}) states that (differential) variables governed by linear DAEs can be soundly added to the system, \emph{provided that} $\mA$ is invertible along all traces of the original DAE $\dap{\vvX}$.
In that case, the linear DAE is equivalent to a linear ODE obtained by solving for the differential variables $\vvZp$ via  $\mA^{-1}$. 

The following axioms provide reasoning principles to analyze subsystems of DAEs. 
\begin{align*}
  \axtag{ax:dc} & \quad \lquote{ax:dc}\\
  \axtag{ax:dr} & \quad \lquote{ax:dr} & \lside{ax:dr}\\
  \axtag{ax:dm} & \quad \lquote{ax:dm} \\
  \axtag{ax:dp} & \quad \lquote{ax:dp}
\end{align*}
Axiom \lref{ax:dc} is the \emph{differential cut} axiom, expressing that if a formula $R$ is always satisfied along a DAE $\dap{\vvX}$, then $R$ can be soundly added as a constraint. 
Axiom \lref{ax:dr} is the \emph{differential refinement} axiom, axiomatizing the fact that additional constraints only restrict the system.
Intuitively, the \emph{differential monotonicity} axiom \lref{ax:dm} expresses a monotonicity property of differential programs.
If $\dap \rfle \dap{\vvX}{\fG}$ holds, then imposing an additional constraint $R$ on both systems preserves the refinement relation.
The \lref{ax:dp} Axiom (\emph{differential projection}) states that if $\fF$ and $\fG$ are semantically independent of $\vvYp$ and $\vvY$ along all traces, then it is sound to prove the refinement for the full system $(\vvX,\vvY)$ by proving it on the $\vvX$-subsystem.

\begin{theorem}[Soundness]
    \label{thm: sound axiomatization}
  The \dAL proof calculus is sound. 
\end{theorem}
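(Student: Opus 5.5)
Soundness means that every axiom is valid (true in every state) and every proof rule preserves validity. The proof therefore goes axiom by axiom, using the trace semantics of Definition~\ref{def:prog_sem}, the coincidence lemma (Lemma~\ref{lem:coincidence}) and the bound effect lemma. Several items transfer directly:
\begin{itemize}
\item The discrete and first-order axioms, and the rules of \dL, carry over by Proposition~\ref{prop: conservative_extension}, because box formulas only depend on reachable terminal states.
\item The differential axioms \lref{ax:diff_var}, \lref{ax:diff_const}, \lref{ax:diff_plus}, \lref{ax:diff_mul} are state-wise identities about the semantics of $\D{\tE}$. They follow from linearity and the product rule for partial derivatives, exactly as in \dL.
\item \lref{ax:jacobian} follows by structural induction on terms. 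One shows that $\semS{\sPart{\tE}{x}}{\sO}=\frac{\partial}{\partial r}\semS{\tE}{\stateUpd[\sO][x]}$ and then uses that $\vvXp$ does not occur in $\ttE$, so the remaining variables contribute nothing.
\end{itemize}

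\textbf{Refinement axioms.} For \lref{ax:tr}, traces are composed through the coincidence relation, which is transitive. For \lref{ax:r}, take $\flA\in\semPS[\pA]$. Pick $\flB\in\semPS[\pB]$ with $\flA \coincide \flB$; then $\flBT\in\sem{\fP}$, and I need $\flAT\in\sem{\fP}$. The two terminal states agree on $\vvX$. On any variable outside $\vvX$ and outside $\BV{\pA,\pB}$, both agree with $\sO$ by the bound effect lemma. By the side condition, $\FV{\fP}$ is covered by these two cases, so Lemma~\ref{lem:coincidence} concludes.

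\textbf{Box axioms.}
\begin{itemize}
\item \lref{ax:dw} is immediate, since every trace stays in $\sem{\fF}$.
\item For \lref{ax:dileq}, $g(t)=\semS{\tE}{\flA(t)}$ is $C^1$, and by the chain rule together with $\vX$-regularity, $g'(t)=\semS{\D{\tE}}{\flA(t)}\le 0$. The mean value theorem then gives $g(t)\le g(0)$. Two points need care. Initial-state differentials are respected, since $\flA(0)=\sO$. The zero-duration trace is handled trivially.
\item For \lref{ax:dhc}, $g\equiv 0$ on $\flInt$, so $g'\equiv 0$, which gives $\D{\tE}=0$ along the trace. This uses one-sided derivatives at the endpoints. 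For the duration-zero trace the premise $\D{\tE}=0$ at $\sO$ gives the conclusion directly.
\end{itemize}

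\textbf{Subsystem axioms.}
\begin{itemize}
\item \lref{ax:dc}: if $\fR$ holds along every trace, each trace of $\dap$ is itself a trace of $\dap{\vvX}{\fR}$.
\item \lref{ax:dr}: given $\flA$ of the larger system, the same trace works. Variables $\vvZ$ are held by the quantifiers, and the projection onto $\vvX$ is $\vvX$-regular.
\item \lref{ax:dm}: for $\flA$ in $\fF\Land\fR$, take $\flB$ coinciding on $\vvX$ with $\flB\subseteq\sem{\fG}$. Since $\fR$ is over $\vvX,\vvXp$ (implicit in the DAP grammar), coincidence transfers $\fR$ to $\flB$.
\item \lref{ax:dp}: glue the $\vvX$-part of the trace given by the refinement hypothesis with the $\vvY$-part of the original trace. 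The box premises, combined with the existential closure, show that membership in $\fF$ and $\fG$ depends only on the $\vvX$-coordinates. This is the case that needs the most bookkeeping.
\end{itemize}

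\textbf{Ghost axioms (main obstacle).} Here new $C^1$ components must be constructed on the \emph{same} interval $\flInt$.
\begin{itemize}
\item For \lref{ax:dg}, along the given trace $\mA(t)$ is continuous and invertible on the compact interval. So $\vvZp=\mA^{-1}(\mB\vvZ+\ttC)$ is a linear ODE with continuous coefficients, and it has a unique global $C^1$ solution on all of $\flInt$ from any initial $\vvZ$. The initial $\vvZp$ is chosen to match, which accounts for the $\Lexists{\vvZp}$.
\item For \lref{ax:ag}, I define $\vvZ(t)=\semS{\ttG}{\flA(t)}$ and must show it is $C^1$. Here the invertibility of $\jacobian[\vvXp][\ttE]$, together with $\ttE=0$ along the trace, gives via the implicit function theorem that $\vvXp$ is locally a $C^1$ function of $\vvX$. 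Hence $\vvXp(t)$, and with it $\ttG$, is $C^1$ in $t$, and $\vvZp$ is set to its derivative.
\end{itemize}
I expect the regularity argument for \lref{ax:ag} to be the hardest step. The plan is to apply the implicit function theorem pointwise and patch the local representations by uniqueness and compactness of $\flInt$.
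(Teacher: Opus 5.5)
Your axiom-by-axiom plan matches the paper's, and most cases go through as you describe: $[\le]$, $\le^t$, DW, DI via the mean value theorem, DHC, DC, DR, DG via a linear ODE with continuous coefficients, and DP via freezing and re-gluing the $y$-components.

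\textbf{The gap is in AG, at the existential witnesses.} In $\exists z\,\exists z'\,(\{x:F\}\le_x\{x,z:F\wedge z=g\})$ the values $\nu(z),\nu(z')$ are fixed before the refinement quantifies over all traces $\varphi$ from $\nu$. The trace $\psi$ you build must start in $\nu$. So you need $\frac{d}{dt}g(\varphi(t))|_{t=0}=\nu(z')$ for every positive-duration trace $\varphi$ at once.
\begin{itemize}
\item ``$z'$ is set to its derivative'' only defines $z'$ along one given trace. Unlike in DG, you never pick the witness or show that this initial derivative is independent of the trace.
\item Because $g$ may contain $x'$ (the intended use is $g\equiv x'$), this derivative involves $x''(0)$, which is not part of the state. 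So this is a real proof obligation, and your regularity argument does not discharge it.
\item The paper handles it by fixing the canonical $C^2$ solution from Lemma~\ref{lem:c_flow}. It defines $\nu(z')$ as the derivative of $g$ along that solution at $0$, and uses local uniqueness to identify every trace with it near $0$.
\end{itemize}
Your own tool closes the gap. The implicit function $h$ obtained at $(\nu(x),\nu(x'))$ depends only on $\nu$. By continuity of $x'$, every positive-duration trace satisfies $x'(t)=h(x(t))$ for small $t$. Hence $x''(0)=Dh(\nu(x))\,\nu(x')=-(J_{x'}e)^{-1}J_xe\,x'$ evaluated at $\nu$, and $\frac{d}{dt}g(\varphi(t))|_{t=0}$ is a function of $\nu$ alone. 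Take that value as $\nu(z')$, and take $\nu(z)$ as the value of $g$ in $\omega$ (using $z,z'\notin g$). With this addition your AG argument is more direct than the paper's: the IFT plus continuity gives $C^2$-regularity of every trace pointwise, without the existence half of Lemma~\ref{lem:c_flow}.

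\textbf{Two smaller points.}
\begin{itemize}
\item In DM, your claim that $R$ is over $x,x'$ ``implicit in the DAP grammar'' is false. Constraints may mention parameters, such as $m,l,g$ in the pendulum. The right reason is the paper's: both traces are $x$-regular from the same $\omega$, so they are constant outside $x\cup x'$ and coincide on $x\cup x'$. Hence they are identical.
\item For J, and for the identity $\frac{d}{dt}e(\varphi(t))=(e)'$ in DI/DHC, variables of $e$ outside $x$ do contribute. Each such $y$ adds $y'\,\partial e/\partial y$ to $(e)'$, while an $x$-regular trace merely freezes $y'$ at an arbitrary initial value. With evolving tuple $(x)$, J yields the invalid instance $(y)'=0$. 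For DI, take $e\equiv x-y$ with initially $x=y=0$, $x'=1$, $y'=10$: the trace $x(t)=t$ keeps $(e)'=-9\le 0$ but ends with $e=1>0$. These steps need $\mathrm{FV}(e)\subseteq x$ as an extra hypothesis. The paper's Lemma~\ref{lem:diff} makes the same silent assumption, so this is not a divergence from the paper, but your stated justification does not establish it.
\end{itemize}
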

\begin{proof}
  See Appendix \ref{sec:appendix_soundness_proof}.
\end{proof}

\subsubsection{Derived Rules.}
\ldef{rl:dhc}{dHC}{Formula}
\ldef{rl:di}{dI}{Formula}
\ldef{rl:da}{dA}{Formula}
\ldef{rl:dw}{dW}{Formula}
\ldef{rl:g}{G}{Formula}
\ldef{rl:real}{$\R$}{Formula}
\ldef{rl:cut}{cut}{Formula}
\ldef{rl:unfold}{unfold}{Formula}
\ldef{rl:rtr}{rTR}{Formula}

An important feature of axiomatic, logical approaches to the verification of dynamical systems is the ability to \emph{derive} sophisticated proof rules from a small set of sound axioms by clever deductions. Such derived axioms are then guaranteed to be sound by the soundness of the base axioms (Theorem \ref{thm: sound axiomatization}). The following theorem presents the list of derived axioms established, allowing for a symbolic, deductive treatment of index reduction in Section \ref{sec:index_reduction}.

\begin{corollary}[Derived Rules]
  \label{thm:derived_axioms}
Let $\ttE$ be a differential-free term. The following are derived rules of \dAL:

\begin{center}
\begin{minipage}{0.35\textwidth}
\centering
\begin{sequentproof}[align]
  \ax{\fF}{\fP}
  \un[rl:dw]{\Gamma}{\fBox[\dap],\Delta}
\end{sequentproof}
\end{minipage}
\hfill
\begin{minipage}{0.55\textwidth}
\centering
\begin{sequentproof}
  \ax{\fF}{\fR}
  \ax{\Gamma}{\dap{\vvX}{\fR} \rfle \dap{\vvX}{\fG}, \Delta}
  \bi[rl:da]{\Gamma}{\dap \rfle \dap{\vvX}{\fG}, \Delta}
\end{sequentproof}
\end{minipage}
\end{center}

\begin{sequentproof}[align]
  \ax{\Gamma}{\ttE \preceq \ttZero, \Delta}
  \ax{\Gamma}{\fBox[\dap][\D{\ttE} \leq \ttZero], \Delta}
  \bi[rl:di]{\Gamma}{\dap \rfeq \dap{\vvX}{\fF \Land \ttE \preceq \ttZero}, \Delta}
\end{sequentproof}

\begin{sequentproof}[align]
  \ax{\Gamma}{\D{\ttE} = \ttZero, \Delta}
  \ax{\Gamma}{\fBox[\dap][\ttE = \ttZero], \Delta}
  \bi[rl:dhc]{\Gamma}{\dap \rfeq \dap{\vvX}{\fF \Land \D{\ttE} = \ttZero}, \Delta}
\end{sequentproof}
\end{corollary}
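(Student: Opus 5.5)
We derive each of the four rules separately from the axioms of Section~\ref{sec:proof_calculus}, together with standard first-order and propositional reasoning and the usual G\"odel generalization and monotonicity rules for boxes. These rules follow from \lref{ax:k} in the standard way, and their soundness follows from the trace semantics. Since every axiom is sound (Theorem~\ref{thm: sound axiomatization}), each derived rule is sound. We treat the rules in increasing order of difficulty.

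\emph{Rule \lref{rl:dw}.} From the premise $\fF \vdash \fP$, we obtain $\vdash \fF \Limplies \fP$. Generalization then gives $\fBox[\dap][(\fF \Limplies \fP)]$, and \lref{ax:k} combined with \lref{ax:dw} yields $\fBox[\dap]$. Weakening by $\Gamma,\Delta$ finishes the rule.

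\emph{Rule \lref{rl:da}.} From $\fF \vdash \fR$, the rule \lref{rl:dw} gives $\fBox[\dap][\fR]$. Axiom \lref{ax:dc} then turns this into $\dap \rfle \dap{\vvX}{\fR}$. Transitivity \lref{ax:tr} with the second premise gives $\dap \rfle \dap{\vvX}{\fG}$. One subtlety has to be checked here. \lref{ax:dc} yields the constraint $\fR$ alone and not $\fF \Land \fR$, which is exactly the shape the rule needs.

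\emph{Rules \lref{rl:di} and \lref{rl:dhc}.} Both have the same structure, so we prove the mutual refinement in two directions.

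For the direction $\dap{\vvX}{\fF \Land \ttE \preceq \ttZero} \rfle \dap$, we instantiate \lref{ax:dr} with no ghost variables, so $\vvZ$ is empty and the quantifiers vanish, and we take $\fR \equiv \ttE \preceq \ttZero$. This direction needs no premises.

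For the converse direction, we first show $\fBox[\dap][\ttE \preceq \ttZero]$.
\begin{itemize}
\item In the case of \lref{rl:di}, we combine \lref{ax:dileq} componentwise with the box premise.
\item In the case of \lref{rl:dhc}, we combine \lref{ax:dhc} componentwise with the premise $\D{\ttE}=\ttZero$. The vector versions are handled per component using \lref{ax:k} and the box conjunction rule.
\end{itemize}

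The main obstacle is that these axioms conclude boxes over the \emph{modified} DAPs $\dap{\vvX}{\D{\tE}\le 0}$ and $\dap{\vvX}{\tE=0}$, rather than over $\dap$ itself. We bridge this gap with \lref{ax:r}. The box premise together with \lref{ax:dc} gives $\dap \rfle \dap{\vvX}{\D{\tE}\le 0}$ (respectively $\dap \rfle \dap{\vvX}{\tE = 0}$ in the hidden-constraint case). Then \lref{ax:r} transfers the box from the larger system down to $\dap$. The side condition of \lref{ax:r} holds because $\FV{\tE \preceq 0}\subseteq\vvX$, using $\vvXp\notin\tE$, and the analogous observation gives $\FV{\D{\tE}=0}\subseteq\vvX\cup\vvXp$. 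These variables are tracked by the coincidence relation, since it also covers the differential variables. If that reading fails, we instead apply \lref{ax:r} with respect to the enlarged tuple.

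Once $\fBox[\dap][\ttE \preceq \ttZero]$ (respectively $\fBox[\dap][\D{\ttE}=\ttZero]$) is established, \lref{ax:dc} gives the required refinement into $\dap{\vvX}{\fF \Land \ttE\preceq\ttZero}$. To get the conjunction form, we use \lref{ax:dc} with $\fR$ taken to be the conjunction $\fF \Land \ttE\preceq\ttZero$, whose box follows from \lref{ax:dw} and the box conjunction rule. Combining both directions gives $\rfeq$.
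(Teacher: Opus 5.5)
Your derivations of \lref{rl:dw}, \lref{rl:da} and \lref{rl:di} go through, and the last two are shorter than the paper's. For \lref{rl:da} you apply \lref{ax:dc} directly with $\fR$ rather than detouring through $\fF \Land \fR$ and \lref{ax:dr}. For \lref{rl:di} you move the \lref{ax:dileq} box onto $\dap$ with \lref{ax:r} rather than using the paper's \lref{ax:dm} chain. That \lref{ax:r} step is legitimate because $\vvXp \notin \ttE$ gives $\FV{\ttE \preceq \ttZero} \cap \BV{\pA,\pB} \subseteq \vvX$. (After \lref{ax:dr} you also need a routine \lref{rl:da} step to reorder $\ttE \preceq \ttZero \Land \fF$ into $\fF \Land \ttE \preceq \ttZero$.)

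\textbf{The gap.} It is the same \lref{ax:r} move in the \lref{rl:dhc} case. There the postcondition $\D{\ttE} = \ttZero$ has free variables in $\vvXp$, and both DAPs bind $\vvXp$. The calculus sets $\BV{\dap} = \vvX \cup \vvXp$, keeping $\vvXp$ separate from $\vvX$. The prose after \lref{ax:r} also reads its side condition as forbidding free variables of $\fP$ outside $\vvX$ that either program modifies. So $\FV{\fP} \cap \BV{\pA,\pB} \subseteq \vvX$ fails. Your remark that coincidence also covers differential variables is semantically true, and it shows that a relaxed \lref{ax:r} would be sound. But invoking it changes the axiom rather than deriving from it. Your fallback does not rescue the step either: none of \lref{ax:dc}, \lref{ax:dr}, \lref{ax:dm}, \lref{ax:tr} produces a refinement between these DAPs indexed by $(\vvX,\vvXp)$. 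So there is nothing to feed into \lref{ax:r} ``with respect to the enlarged tuple''.

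\textbf{The fix.} The missing ingredient is \lref{ax:dm}. It carries an extra constraint across a refinement without any free-variable side condition; semantically, two $\vvX$-regular traces from the same initial state that coincide on $\vvX$ are identical. This is the paper's route:
\begin{enumerate}
\item Cut in $\D{\ttE} = \ttZero$ and use \lref{ax:dhc} to get $\fBox[\dap{\vvX}{\ttE = \ttZero}][\D{\ttE} = \ttZero]$.
\item Turn this with \lref{ax:dc} into $\dap{\vvX}{\ttE = \ttZero} \rfle \dap{\vvX}{\ttE = \ttZero \Land \D{\ttE} = \ttZero}$.
\item Apply \lref{ax:dm} with $\fR \equiv \fF$, which conjoins $\fF$ on both sides.
\item Chain this via \lref{ax:tr} with $\dap \rfle \dap{\vvX}{\fF \Land \ttE = \ttZero}$, obtained from the box premise and \lref{ax:dc}.
\item Reorder conjuncts with \lref{rl:da} and drop $\ttE = \ttZero$ with \lref{ax:dr}.
\end{enumerate}
This yields the refinement directly, without ever needing the box $\fBox[\dap][\D{\ttE} = \ttZero]$.
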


\begin{proof}
  See Appendix \ref{sec:appendex_derived_rules}.
\end{proof}

\begin{itemize}
  \item Proof rule \lref{rl:dw} is the derived version of axiom \lref{ax:dw}.
  \item Proof rule \lref{rl:da} is the \emph{differential advance} rule, where a constraint $\fF$ can be replaced by $\fR$ provided that $\fR$ is more permissive.
  \item Proof rule \lref{rl:di} is the \emph{differential induction} rule for DAPs, the derived version of axiom \lref{ax:dileq}.
  \item Proof rule \lref{rl:dhc} is the \emph{differential hidden constraint} rule for equality constraints of the form $\ttE = \ttZero$, the derived version of \lref{ax:dhc}. 
\end{itemize}

The following example shows how $\dAL$ can be used for safety verification via refinements.

\begin{example}
  \label{ex:safety}
  Consider the system $\{x, y : x' = -y \land xy = 1\}$ with variables $\vvX = (x, y)$ and suppose we wish to prove the property that starting from $x = y = 1, x' = -1, y' = 1$, the positivity condition $y > 0$ always holds. That is, the goal is to prove the following sequent
  \[\sequent{x = y = 1, x' = -1, y' = 1}{\boxx{\{x, y : x' = -y \land xy = 1\}}{y > 0}}\]
  First apply axiom \lref{ax:dg} with $\mA = \mI$ being the identity matrix to soundly introduce the differential ghost $z' = -y^2z/2$ with initial condition $z^2y = 1$ (sound as $y = 1$ initially). This application yields the refinement relation
  \[\{x, y : x' = -y \land xy = 1\} \rfle[\vvX] \{x, y, z : x' = -y \land xy = 1 \land z' = -y^2z/2\}\]
  As the desired safety condition is $y > 0$, an application of the refinement axiom \lref{ax:r} reduces the problem to proving
  {\footnotesize
  \[\sequent{x = y = 1, x' = -1, y' = 1, z^2y = 1}{\boxx{\{x, y, z : x' = -y \land xy = 1 \land z' = -y^2z/2\}}{y > 0}}\]
  }
  Further note that it suffices to prove the safety property of $z^2y = 1$, as it then follows by arithmetic reasoning that $y > 0$. To establish this, an application \footnote{Technically we use the equality version, which is equivalent to applying \lref{ax:dileq} twice with different signs.} of axiom \lref{ax:dileq} reduces the proof obligation to establishing the validity of $(z^2y)' = 0$ along the DAE. By the syntactic definition of differentials, this evaluates to 
  \[(z^2y)' = (z^2)'y + z^2(y)' = 2z'zy + z^2y' = -y^3z^2 + z^2y'\]
  By axioms \lref{ax:dw} and rule \lref{rl:da}, we may apply arithmetic reasoning and conclude that since $xy = 1$ holds, $x \neq 0$ and $(z^2y)' = 0 \leftrightarrow x \cdot (z^2y)' = 0$. The latter expression evaluates to (using $xy = 1$)
  \[x \cdot (-y^3z^2 + z^2y') = -xy^3z^2 + xz^2y' = -y^2z^2 + xz^2y'\]
  The proof is now seemingly stuck, as no further progress can be made without explicit information on $y'$. Nonetheless, axiom \lref{ax:dhc} can be applied on the algebraic equality $xy = 1$ to derive \emph{implicit} constraints on $y'$, this gives:
  \[\sequent{}{(xy)' = 0 \rightarrow \boxx{\{x, y : x' = -y \land xy = 1\}}{(xy)' = 0}}\]
  Direct computations give $(xy)' = x'y + xy'$, which is indeed equivalent to $0$ at the initial conditions $x = y = 1, x' = -1, y' = 1$. Thus, axiom \lref{ax:dhc} derives that this is also true along the dynamics, simplifying with $xy = 1$ gives
  \[\sequent{x'y + xy' = 0, xy = 1, x' = -y}{xy' = y^2}\]
  deriving
  \[\sequent{x = y = 1, x' = -1, y' = 1}{\boxx{\{x, y : x' = -y \land xy = 1\}}{xy' = y^2}}\]
  Finally, combining this with our earlier differential computation (axiom \lref{ax:dc}) proves
  \[\sequent{(z^2y)' = -y^2z^2 + xz^2y',  xy' = y^2}{(z^2y)' = -y^2z^2 + y^2z^2 = 0}\]
  Thus, this shows that $z^2y = 1$ is an invariant along the DAE, thereby provably establishing the safety property $y > 0$, as desired. 
\end{example}

\section{DAE Index Reduction}
\label{sec:index_reduction}
Unlike ordinary differential equations, differential-algebraic equations require symbolic preprocessing before deductive verification methods can be applied.
In particular, not just any choice of initial values admits a local solution: the initial data must satisfy the algebraic constraints and the associated dynamics constraints that follow from them.
For instance, consider the system: 
\(
  \dap{x,y}{x^2 + y^2 = 1}.
\)
Assigning arbitrary values to $(x,y)$ with $x^2+y^2=1$ is not sufficient to guarantee the existence of a positive-duration trace, since the differential variables $(x',y')$ must also be initialized consistently, tangent to the unit circle.
This condition can be made explicit by computing the differential of the circle equation: $\D{x^2 + y^2 - 1} \equiv 2xx' + 2yy' = 0$.
Thus, a trace of positive duration satisfying algebraic constraint $x^2 + y^2 = 1$ must also satisfy the induced hidden constraint $2xx' + 2yy' = 0$ all along its evolution.
Revealing these hidden constraints is crucial for DAEs safety proofs.

In the numerical DAE literature, Gear~\cite{Gear1988} showed that differentiation of algebraic constraints exposes these implicit constraints and formalized the concept of the \emph{differentiation index} (or simply index).
The index is the number of times a system must be differentiated to uniquely solve for its differential variables, so a time-stepping procedure can compute a single, well-defined trajectory at each step.
While this standard index concept is adequate for numerical simulation, it silently assumes that the system is well-posed.
Although this assumption is pragmatic, it is inadequate for deductive verification.
In this domain, models are often parametric and underspecified. 
Yet, revealing implicit constraints on differential variables is a prerequisite to enable the full power of differential invariants.
To illustrate how these hidden constraints arise in simple mechanical models, consider the canonical benchmark for index reduction (cf. \cite{Gear1988}):
\begin{example}\label{ex:euclidean_pendulum}
  To illustrate the extraction of hidden constraints, consider a constrained mechanical system, a pendulum in Cartesian coordinates, with variables $\vvX = (x, y, v, w, \lambda)$. 
  The variables $x,y$ are the $x,y$-coordinates, $v,w$ their corresponding velocities and $\lambda$ is a constraint multiplier.
  The constraint of the initial system is given by:
  \[
    \ttF = \begin{pmatrix} 
      x' - v\\ 
      mv' - \lambda x \\
      y' - w \\
      mw' - \lambda y - mg \\
      x^2 + y^2 - l^2
    \end{pmatrix}
  \]
  While the derivatives of the state variables $x,y,v,w$ are explicit, the system is not differentially closed; meaning that differentiation yields new independent algebraic equations that were not present originally.
  Specifically, each trace of positive duration must satisfy a tangency condition to the algebraic constraint $x^2 + y^2 = l^2$.
  We capture this condition and reduce the index by prolonging the system with the differential of the algebraic constraint, $(x^2 + y^2 - l^2)' \equiv 2xx' + 2yy'$. 
  Performing an algebraic reduction by substituting $x' = v$ and $y' = w$, we obtain the velocity constraint $xv + yw = 0$, ensuring the velocity remains tangent to the circle.
  Repeating this process of differentiation and reduction reveals a second hidden constraint that determines the multiplier $\lambda$ algebraically:
  \[
    l^2\lambda + m(v^2 + w^2) - mgy = 0.
  \]
  Intuitively, this equation fixes the multiplier $\lambda$ to the exact value required to prevent the pendulum from flying off the circular path with radius $l$.
  Finally, differentiating this multiplier equation, we obtain an equation for the differential $\lambda'$. 
  By collecting the original dynamics, the discovered hidden constraints and this final equation, we arrive at the system:
 \begin{equation}\label{eq:pendulum_reduced}
   \ttF_{\text{red}} = \begin{pmatrix} 
      x' - v\\ 
      mv' - \lambda x \\
      y' - w \\
      mw' - \lambda y - mg \\
      l^2 \lambda' + 2 m (m - 1)gw + 2 \lambda(vx+mwy)\\
      x^2 + y^2 - l^2 \\
      2xv + 2yw\\
      l^2\lambda + m(v^2 + w^2) -mgy
    \end{pmatrix}
  \end{equation}

  Notice that the system is now differentially closed.
  We demonstrate this by checking the Jacobian of the system's differential subsystem $\ttG$ (defined as the set of equations in \eqref{eq:pendulum_reduced} containing differential variables):
  \[
  \jacobian[\vvXp][\ttG] \equiv \begin{pmatrix}
      1 & 0 & 0 & 0 & 0\\
      0 & 1 & 0 & 0 & 0\\
      0 & 0 & m & 0 & 0\\
      0 & 0 & 0 & m & 0 \\
      0 & 0 & 0 & 0 & l^2 
      \end{pmatrix}
  \]
  Provided that the physical parameters satisfy $m \neq 0$ and $l \neq 0$, the Jacobian is non-singular.
  This ensures that the system now uniquely determines the differential variables, characterizing an implicit ODE.
\end{example}
The informal index reduction procedure sketched above can be formalized within the \dAL proof calculus as a refinement between the original and the reduced system.
\begin{example}\label{ex:euclidean_formal}
  Proving the refinement \(
    \dap{\vvX}{\ttF = \ttZero} \rfeq \dap{\vvX}{\ttF_{\text{red}} = \ttZero},
  \)
  requires the following explicit initial conditions:
  \begin{align*}
    &\Gamma \equiv xx' + yy' = 0 \Land x'v + xv' + y'w + yw' = 0 \\
    &\qquad \Land l^2  \lambda'  + 2mvv' + 2mww' - mgy' = 0.
  \end{align*}
  Geometrically, $\Gamma$ enforces tangency to the algebraic constraints.
  Without it, the refinement fails to hold, because the reduced system would not admit a trace of positive duration.

  We now demonstrate how a single step of differentiation and algebraic reduction is performed within the calculus.
  The \lref{rl:dhc} rule provides the formal justification for refining the augmented differential system:
  \begin{prooftree}
    \ax*[rl:real]{\Gamma}{2xx' + 2yy' = 0}
    \ax*[rl:real]{\ttF = \ttZero}{x^2 + y^2 = l^2}
    \un[rl:dw]{\Gamma}{\fBox[\dap{\vvX}{\ttF = \ttZero}][x^2 + y^2 = l^2]}
    \bi[rl:dhc]{\Gamma}{\dap{\vvX}{\ttF = \ttZero} \rfeq \dap{\vvX}{\ttF = \ttZero \Land 2xx' + 2yy' = 0}}
  \end{prooftree}

  The algebraic reduction is derived with the \lref{ax:dc}~axiom and the \lref{rl:dw}~rule: \begin{prooftree}
    \ax*[rl:real]{}{(\ttF = \ttZero \Land 2xx' + 2yy' = 0) \leftrightarrow (\ttF = \ttZero \Land 2xv + 2yw = 0)}
    \un[ax:dc,rl:dw]{\Gamma}{\dap{\vvX}{\ttF = \ttZero \Land 2xx' + 2yy' = 0} \rfeq \dap{\vvX}{\ttF = \ttZero \Land 2xv + 2yw = 0}}
  \end{prooftree}
  
  By the \lref{ax:tr}~axiom, we may combine the above derivations to obtain:
  \[
    \Gamma \vdash \dap{\vvX}{\ttF = \ttZero} \rfeq \dap{\vvX}{\ttF = \ttZero \Land 2xv + 2y = 0}
  \]
  
  Iterative application of these rules yields the following refinement of the original and the reduced system from \eqref{eq:pendulum_reduced}:
  \[
    \Gamma \vdash \dap{\vvX}{\ttF = \ttZero} \rfeq \dap{\vvX}{\ttF_{\text{red}} = \ttZero}
  \]
  This demonstrates that the \dAL proof calculus is capable of formalizing the index reduction refinement for the Euclidean Pendulum example.
  Furthermore, as the established equivalence between the reduced and the original system is \emph{parametric} in $l$ and $m$, 
  this reduction is provably valid not just for a specific case, but for any instance of these parameters.
\end{example}
The technique of Example~\ref{ex:euclidean_formal} is representative of most DAE index reduction methods.
The process is iterative, operating on a general differential-algebraic system $\dap{\vvX}{\ttF = \ttZero}$ by following these steps at each stage $i$:
\begin{enumerate}
  \item Extract the algebraic part $\ttR^A_i$: This is the set of consequences of $\ttF_i = \ttZero$ that contain no differential variables. \footnote{Formally, $\ttR_i^A$ can be found by computing a basis of the elimination ideal using the lexicographic monomial ordering where the derivatives are ranked higher than state variables (cf. \cite{CoxLittleOShea2025}).}
  \item Augment the system with the differential of $\ttR_i^A$: \[\ttF_{i+1} \equiv \ttF_i \cup \D{\ttR_i^A}.\]
\end{enumerate}
We establish the soundness of this transformation through the following theorem.
It proves that the reduced system is equivalent to the original one, provided that the initial conditions satisfy the hidden constraints.
\ldef{rl:ir}{IR}{Formula}
\begin{theorem}[DAE index reduction]\label{thm:index_reduction}
  Let $\ttR_0^A,\dots,\ttR_m^A$ be differential-free terms.
  Then, the following is a derivable proof rule of \dAL:
  \begin{sequentproof}
      \ax{\Gamma}{\bigwedge_{i=0}^{m} \D{\ttR_i^A} = \ttZero}
      \ax{}{\bigwedge_{i=0}^m(\ttF_i = \ttZero \rightarrow \ttR_i^A = \ttZero)}
      \bi[rl:ir]{\Gamma}{\dap{\vvX}{\ttF_{m} = \ttZero} \rfeq \dap{\vvX}{\ttF_0 = \ttZero}}
  \end{sequentproof}
\end{theorem}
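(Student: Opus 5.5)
Throughout, $\ttF_{i+1} \equiv \ttF_i \cup \D{\ttR_i^A}$, so $\ttF_m = \ttZero$ is the conjunction of $\ttF_0 = \ttZero$ and $\D{\ttR_0^A} = \ttZero,\dots,\D{\ttR_{m-1}^A} = \ttZero$. The proof goes by induction on the stage $k$, showing
\[
\Gamma \vdash \dap{\vvX}{\ttF_k = \ttZero} \rfeq \dap{\vvX}{\ttF_0 = \ttZero}
\]
for $k = 0,\dots,m$. Each step is a single application of the derived rule \lref{rl:dhc} from Corollary~\ref{thm:derived_axioms}, and the steps are chained with the transitivity axiom \lref{ax:tr}, applied in both directions of $\rfeq$. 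The base case $k=0$ is reflexivity of refinement. Reflexivity follows, for example, from \lref{rl:da} with $\fR \equiv \fF$, closing the remaining premise by \lref{ax:dr} at zero ghost variables together with \lref{ax:dc}/\lref{rl:dw}. Alternatively it follows directly from soundness, since every trace coincides with itself.

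For the step from $k$ to $k+1$, apply \lref{rl:dhc} with $\fF \equiv (\ttF_k = \ttZero)$ and $\ttE \equiv \ttR_k^A$. This yields
\[
\dap{\vvX}{\ttF_k = \ttZero} \rfeq \dap{\vvX}{\ttF_k = \ttZero \Land \D{\ttR_k^A} = \ttZero},
\]
and the right-hand side is syntactically $\ttF_{k+1} = \ttZero$. The rule leaves two premises.
\begin{itemize}
\item $\Gamma \vdash \D{\ttR_k^A} = \ttZero$ follows from the left premise of \lref{rl:ir} by propositional weakening of the conjunction.
\item $\Gamma \vdash \fBox[\dap{\vvX}{\ttF_k = \ttZero}][\ttR_k^A = \ttZero]$ follows by \lref{rl:dw} from $\ttF_k = \ttZero \vdash \ttR_k^A = \ttZero$, which is a conjunct of the right premise of \lref{rl:ir}. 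That premise has an empty context, so it is a valid formula and can be used under the box.
\end{itemize}
Composing the new equivalence with the induction hypothesis via \lref{ax:tr} in both directions gives the claim for $k+1$. The index $m$ itself needs no step; the conjuncts for $i=m$ are simply unused premises.

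The main obstacle I expect is bookkeeping rather than mathematics. First, the paper writes $\ttF_{i+1}$ as a set union, while \lref{rl:dhc} produces a conjunction $\fF \Land \D{\ttE} = \ttZero$. Identifying the two, with arbitrary reordering or duplication of conjuncts, requires an equivalence-of-constraints step: \lref{rl:da} in both directions with a propositional tautology, or \lref{ax:dc} plus \lref{rl:dw}, as in the algebraic-reduction step of Example~\ref{ex:euclidean_formal}. Second, \lref{rl:dhc} is stated for a vector $\ttE$, which I take as given. If only the scalar axiom \lref{ax:dhc} were available, each vector step would itself be iterated componentwise, with the same transitivity chaining. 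Third, the context $\Gamma$ must be carried through every step. This is harmless, because all box and refinement premises are discharged either from $\Gamma$ directly or from context-free valid facts.
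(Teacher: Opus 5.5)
Your proposal is correct and follows essentially the same route as the paper: induction on the stage, one application of \lref{rl:dhc} per step with its box premise discharged by \lref{rl:dw} from the right premise, chaining with transitivity, and reflexivity for the base case. Your remark that the $i=m$ conjuncts go unused is more careful about indices than the paper's own final display. Note, however, that only your syntactic reflexivity route (\lref{ax:dc} with \lref{rl:dw}, then \lref{ax:dr} and \lref{ax:tr}) gives derivability; \lref{rl:da} with $\fR \equiv \fF$ alone just reproduces the goal, and the appeal to soundness would establish only validity.
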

\begin{proof}
  See Appendix \ref{sec:index_reduction_proof}.
\end{proof}
This theorem establishes that our calculus is relatively complete with respect to the symbolic index reduction procedure; meaning every valid step of the reduction corresponds to a derivable sequence of proof rules in \dAL.

\section{Related work}
\paragraph{Deductive Verification of Differential Equations} The deductive verification of differential equations aims to axiomatize all necessary principles using a small set of core axioms, and deriving other desired properties (e.g. safety, liveness) with accompanying trustworthy syntactic proofs using the sound axioms. Such approaches include differential dynamic logic (\dL) and its variants \cite{DBLP:journals/jar/Platzer08,DBLP:conf/cade/HellwigP25,DBLP:conf/lics/LoosP16,DBLP:journals/logcom/Platzer10}, some of which can be proof-checked in KeYmaera X \cite{Fulton_Mitsch_Quesel_Volp_Platzer_2015}. Sophisticated proof-rules and partial completeness results have also been established in the case for ordinary differential equations \cite{DBLP:journals/jacm/PlatzerT20,compact_IVP}. Axiomatic approaches to DAEs are relatively less explored. Earlier works \cite{DBLP:journals/logcom/Platzer10,DBLP:conf/cade/HellwigP25} considered a direct extension of the reachability-based semantics in \dL to DAEs, and therefore do not provide a refinement-calculus capable of directly expressing index reductions. Earlier work \cite{DBLP:conf/cade/HellwigP25} also assumes traces of DAEs to be real-analytic, which is a much stronger condition than the $C^1$ regularity imposed in this article.
\paragraph{Numerical and structural Index Reduction} Classical index reduction techniques such as those by Gear~\cite{Gear1988}, Pantelides~\cite{Pantelides1988}, and Pryce~\cite{Pryce2001} address the consistent initialization of DAEs for numerical simulation.
In contrast, our goal is to reason about parametric and underspecified models within our axiomatic framework via symbolic invariant reasoning.
By revealing the hidden constraints of a DAE, we can reason about safety entirely without numerical integration.
\paragraph{Differential algebra}
Differential algebra offers a formal treatment of differential equations through the study of differential ideals \cite{ritt1950differential,Kolchin1973}.
Later, these ideas were operationalized into effective algorithms \cite{DBLP:conf/issac/BoulierLOP95,DBLP:conf/snsc/Hubert01a}, providing a rigorous foundation for symbolic index reduction.
However, these approaches are purely structural and do not account for the semantics of continuous traces.
Consequently, they lack the logical framework to reason about reachability or safety properties.
We address this by integrating index reduction into a deductive proof calculus.
\paragraph{Formal semantics of Modeling Languages}
The work of Benveniste et al. \cite{DBLP:conf/hybrid/BenvenisteCEGOP17,DBLP:journals/arc/BenvenisteCM20} most closely aligns with our framework in terms of modeling power, as they formally address the interplay between DAEs and discrete mode transitions.
However, their primary concern is the correctness of compilation and execution for DAE-based simulation software.
By contrast, we go beyond execution and provide an axiomatic framework for deductive verification.

\section{Conclusion}
This paper presented an axiomatic framework $\dAL$ for the deductive verification of general DAEs. By leveraging a novel trace-based semantics, $\dAL$ supports sound comparison of DAEs through refinements, enabling the incremental verification of complicated DAEs via iterative simplifications. In particular, index reductions of DAEs can be completely internalized in $\dAL$, with supporting syntactic proofs of correctness at each step of the reduction. For future work, it would be interesting to implement a proof-checker for $\dAL$.

\begin{credits}
\subsubsection{\ackname}
This work has been supported by an Alexander von Humboldt Professorship and the Deutsche Forschungsgemeinschaft (DFG, German Research Foundation) - SFB 1608 - 501798263.

\subsubsection{\discintname}
The authors have no competing interests to declare that are
relevant to the content of this article.
\end{credits}

\printbibliography
\appendix
\section{Deferred Proofs}

\subsection{Conservative Extension.}
\label{sec: appendix_conservative_extension}
This section completes the proof of Proposition \ref{prop: conservative_extension}, that $\dAL$ is a conservative extension of the reachability-based semantics of $\dL$ \cite{DBLP:journals/jar/Platzer08}.
\begin{proof}
  Let \((\sO, \sN) \in \States \times \States\) be a pair of arbitrary states.
  We establish the equivalence with respect to reachability by structural induction on programs.
  Let \(\pA,\pB\) be two programs satisfying the induction hypothesis.
  \begin{description}
    \item[Case \(\assign\)] \hfill \\
    \((\Rightarrow)\) Let \((\sO,\sN) \in \semdL{\assign}\).
    By the relational semantics of \dL, we have \(\sN = \stateUpd[\sO][\vX][\semS{\tE}{\sO}]\), i.e. $\sN$ is the state $\sO$ with variable $x$ having value $\semS{\tE}{\sO}$. From the semantics of discrete assignment, the zero-duration trace satisfies \(\flPoint[\sN] \in \semPS[\assign]\).
    By construction, it holds \(\delta_{\sN}(0) = \sN\), completing the forward direction.

    \smallskip
    \((\Leftarrow)\) Conversely, let \(\flA \in \semPS[\assign]\) with \(\flAT = \sN\).
    By the trace-based semantics, we have \(\flA = \delta_\sN\) and \(\sN = \stateUpd[\sO][\vX][\semS{\tE}{\sO}]\).
    Thus, by the \dL-semantics, \((\sO,\sN) \in \semdL{\assign}\), completing this case.
    \item[Case \(\test\)] \hfill \\
    \((\Rightarrow)\) Let \((\sO,\sN) \in \semdL{\test}\).
    By the \dL-semantics, we have \(\sO = \sN\) and \(\sO \in \sem{\fQ}\).
    By the trace-based semantics, \(\sO \in \sem{\fQ}\) implies \(\delta_\sO \in \semPS[\test]\).
    Since \(\delta_\sO(0) = \sO = \sN\), there exists a trace reaching the terminal state \(\sN\), as required.
    \smallskip

    \((\Leftarrow)\) Conversely, let \(\flA \in \semPS[\test]\) with \(\flAT = \sN\).
    By the trace-based semantics of test, \(\flA\) is equal to the unique zero-duration trace \(\delta_\sO \in \semPS[\test]\).
    Thus, \(\sO = \sN\) and \(\sO \in \sem{\fQ}\), yielding \((\sO,\sN) \in \semdL{\test}\), as required.

    \item[Case \(\pA;\pB\)] \hfill \\
    \((\Rightarrow)\) Assume \((\sO,\sN) \in \semdL{\pA ; \pB}\).
    By the \dL-semantics, there exists an intermediate state \(\sM \in \States\) such that \((\sO,\sM) \in \semdL{\pA}\) and \((\sM,\sN) \in \semdL{\pB}\).
    Applying our inductive hypothesis to these pairs, there exist traces \(\flA \in \semPS\) and \(\flB \in \semPS[\pB][\sM]\) with \(\flAT = \sM\) and \(\flBT = \sN\) respectively. 
    Let \(\flC = \flConcat{\flA}{\flB}\) be the concatenation of these traces.
    By the semantics of sequential composition, we have \(\flC \in \semPS[\pA ; \pB]\).
    Since \(\flCT = \flBT = \sN\), the forward direction holds.
    \smallskip

    \((\Leftarrow)\) Conversely, let \(\flC \in \semPS[\pA; \pB]\) with \(\flCT = \sN\). 
    By the semantics of sequential composition, there exist \(\flA \in \semPS[\pA]\) and \(\flB \in \semPS[\pB][\flAT]\) such that \(\flC = \flA \cdot \flB\).
    Applying the inductive hypothesis to these traces, we conclude \((\sO,\flAT) \in \semProg_\dL\) and \((\flAT, \sN) \in \semdL{\pB}\).
    By the \dL-semantics, the state \(\flAT\) serves as a witness for \((\sO,\sN) \in \semdL{\pA;\pB}\), completing the case.

    \item[Case \(\pA \cup \pB\)] \hfill \\
    \((\Rightarrow)\) Let \((\sO,\sN) \in \semdL{\pA \cup \pB}\).
    By the relational semantics of \dL, we have either \((\sO,\sN) \in \semdL{\pA}\) or \((\sO,\sN) \in \semdL{\pB}\).
    From the IH, there exists a trace reaching \(\sN\) in either \(\semPS\) or \(\semPS[\pB]\).
    Thus, there exists a trace in \(\semPS \cup \semPS[\pB] = \semPS[\pA \cup \pB]\), as required.

    \smallskip
    \((\Leftarrow)\) Let \(\flA \in \semPS[\pA \cup \pB]\) with terminal state \(\flAT = \sN\).
    By the semantics of non-deterministic choice, \(\flA\) is contained in either \(\semPS\) or \(\semPS[\pB]\).
    Applying the IH, yields \((\sO,\sN) \in \semdL{\pA \cup \pB}\).

    \item[Case \(\rep\)] \hfill \\
    Recall that $\rep$ represents (finite) unbounded composition, thus this case is proved by another induction on the equivalence for $\pA^{n}$ for all $n \in \N$, the base case of $n = 0$ is trivial. \\ 
    \((\Rightarrow)\) Let $(\sO,\sN) \in \semdL{\alpha^{n + 1}}$, by definition of sequential composition this implies the existence of an intermediate state $\sM$ such that $(\sO, \sM) \in \semdL{\alpha}$ and $(\sM, \sN) \in \semdL{\alpha^n}$. The claim now follows by first applying the inductive hypothesis of equivalence at $n$ compositions, followed by the inductive hypothesis for equivalence under sequential compositions.

    \smallskip
    \((\Leftarrow)\)
    Similarly, let $\flC \in \semPS[\pA^{n + 1}]$ be an arbitrary trace with terminal state $\flCT = \sN$, by the $\dAL$ semantics for composition, we can decompose $\flC$ as \(\flC = \flConcat{\flA}{\flB}\) with $\flA \in \semPS{\alpha}, \flB \in \semPS[\alpha^n]$. Furthermore, these traces must satisfy $\flA(0) = \flC(0), \flBT = \flCT$, from which the desired claim follows by applying the inductive hypothesis for equivalence of composition.
  \end{description}
\end{proof}

\subsection{Soundness.}\label{sec:appendix_soundness_proof}
We use the following well-known fundamental results from analysis and linear algebra:
\begin{theorem}[Mean Value Theorem]\label{thm:mean-value}
Let $f \colon [a,b] \to \mathbb{R}$ be a function that is continuous on the closed interval $[a,b]$ and differentiable on the open interval $(a,b)$. Then there exists a point $\xi \in (a,b)$ such that
\[
f'(\xi) = \frac{f(b) - f(a)}{b - a}.
\]
\end{theorem}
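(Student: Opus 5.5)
The statement is the classical Mean Value Theorem. I will derive it from Rolle's theorem, and prove Rolle's theorem from the extreme value theorem together with Fermat's criterion for interior extrema.

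\emph{Reduction to Rolle.} I first define the auxiliary function
\[
g(t) = f(t) - f(a) - \frac{f(b)-f(a)}{b-a}\,(t-a), \qquad t \in [a,b].
\]
It is continuous on $[a,b]$ and differentiable on $(a,b)$, since $f$ is and the subtracted term is affine. Direct substitution gives $g(a)=g(b)=0$. Moreover,
\[
g'(t) = f'(t) - \frac{f(b)-f(a)}{b-a}
\]
on $(a,b)$. So it suffices to find some $\xi\in(a,b)$ with $g'(\xi)=0$, which is exactly Rolle's theorem for $g$.

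\emph{Rolle's theorem.} Here $g$ is continuous on the compact interval $[a,b]$, so by the extreme value theorem it attains a maximum $M$ and a minimum $m$.
\begin{itemize}
\item If $M=m$, then $g$ is constant. Since $g(a)=0$, in fact $g\equiv 0$, and any $\xi\in(a,b)$ works.
\item Otherwise, at least one of $M,m$ differs from $g(a)=g(b)=0$. That extremum is then attained at an interior point $\xi\in(a,b)$.
\end{itemize}

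\emph{Fermat's step.} Suppose, say, $\xi$ is an interior maximum; the minimum case is symmetric. Consider the difference quotient $(g(\xi+h)-g(\xi))/h$. For $h>0$ it is $\le 0$, and for $h<0$ it is $\ge 0$. Since $g$ is differentiable at $\xi$, both one-sided limits equal $g'(\xi)$. Hence $g'(\xi)\le 0$ and $g'(\xi)\ge 0$, so $g'(\xi)=0$. Substituting into the formula for $g'$ yields $f'(\xi)=\frac{f(b)-f(a)}{b-a}$.

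\emph{Where the care lies.} No step is a genuine obstacle. The point needing most care is ensuring that the extremum used is interior, so that the two-sided difference-quotient argument applies. This is exactly why the case split on $M=m$ is made: it avoids picking an endpoint extremum. The extreme value theorem itself is taken as a standard consequence of compactness of $[a,b]$ and continuity of $g$.
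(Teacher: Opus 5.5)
Your proof is correct and complete. The affine correction $g$ reduces the claim to Rolle's theorem. Your case split on $M=m$ correctly secures an interior extremum: since $m \le g(a)=g(b)=0 \le M$, an extremum different from $0$ cannot be attained at an endpoint. The one-sided difference-quotient argument then forces $g'(\xi)=0$.

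The paper gives no proof of this statement. It lists the Mean Value Theorem among ``well-known fundamental results from analysis'' and uses it only in the soundness proof of the differential induction axiom. There it is applied to $t \mapsto$ the value of a differential-free term along a $\vvX$-regular trace on $[0,T]$ with $T>0$. Your Rolle/extreme-value/Fermat derivation is the standard textbook proof behind that citation, and it covers the full generality stated: continuity on $[a,b]$ and differentiability only on $(a,b)$. In the paper's actual application the function is $C^1$ on the closed interval, so the weaker integral form $f(b)-f(a)=\int_a^b f'(t)\,dt$ would also suffice.

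The only tacit hypothesis in your argument is $a<b$. It is needed both to divide by $b-a$ and for $(a,b)$ to be nonempty in the constant case. The statement assumes this implicitly as well.
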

\begin{theorem}\label{thm:det_unique_solution}
Let $A \in \mathbb{R}^{n \times n}$ be a square matrix. 
If $\det(A) \neq 0$,
then for every $b \in \mathbb{R}^n$ the linear system
\[
A x = b
\]
has a unique solution $x \in \mathbb{R}^n$.
\end{theorem}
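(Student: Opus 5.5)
The plan is the standard linear-algebra argument: show that the linear map $x \mapsto Ax$ on $\mathbb{R}^n$ is injective, and then get surjectivity from finite dimensionality. Injectivity gives uniqueness and surjectivity gives existence. The determinant is the one computed by the first-row cofactor expansion, mirroring Definition~\ref{def:det} applied to real entries.

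\emph{Step 1: properties of $\det$.} I would first establish that $\det$, as a function of the $n$ columns of $A$, is multilinear and alternating. Alternating means it vanishes whenever two columns coincide. I would prove this by induction on $n$ through the first-row expansion. For multilinearity, each term $(-1)^{1+j}a_{1j}\det(A_{1j})$ is linear in every column: column $j$ enters only through $a_{1j}$, and every other column enters linearly through $A_{1j}$ by the induction hypothesis. For the alternating property, suppose columns $p<q$ coincide. The minors $A_{1j}$ with $j\notin\{p,q\}$ still contain two equal columns, so they vanish by induction. The two remaining terms, $j=p$ and $j=q$, have minors that differ by moving one column across the $q-p-1$ columns lying between $p$ and $q$. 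Together with the signs $(-1)^{1+p}$ and $(-1)^{1+q}$, these two terms cancel.

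\emph{Step 2: injectivity.} Suppose $Av=0$ for some $v\neq 0$. Then the columns $c_1,\dots,c_n$ of $A$ satisfy $\sum_k v_k c_k = 0$. Choose an index $k$ with $v_k\neq 0$ and write $c_k = -\sum_{l\neq k}(v_l/v_k)\,c_l$. Substituting this into slot $k$ of $\det$ and expanding by multilinearity gives a sum of determinants, each of which has a repeated column and so vanishes by the alternating property. Hence $\det(A)=0$, a contradiction. So $\det(A)\neq 0$ implies $\ker A=\{0\}$. Uniqueness follows at once: if $Ax=b=Ay$, then $A(x-y)=0$, so $x=y$.

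\emph{Step 3: existence.} Since $x\mapsto Ax$ is a linear map $\mathbb{R}^n\to\mathbb{R}^n$ with trivial kernel, rank--nullity gives $\dim \operatorname{im} A = n$, so the map is onto. In particular every $b$ has a preimage $x$. As an alternative that avoids rank--nullity, I could use the adjugate and Cramer's rule: $x_i = \det(A^{(i)})/\det(A)$, where $A^{(i)}$ is $A$ with column $i$ replaced by $b$. That route, however, needs the full cofactor identity $A\,\mathrm{adj}(A)=\det(A)I$.

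\emph{Main obstacle.} The only non-routine step is Step 1, namely the sign bookkeeping showing that the first-row expansion is alternating in the columns. I would handle it first for adjacent equal columns $q=p+1$, where the two minors coincide and the signs are opposite. For general $p<q$, note that swapping two adjacent columns flips the sign of $\det$ (expand $\det$ on a matrix whose two adjacent columns both equal $c_p+c_{p+1}$, which is zero by the adjacent case). A sequence of adjacent swaps then reduces the general case to the adjacent one.
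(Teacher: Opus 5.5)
Your proof is correct. Note that the paper gives no proof of this statement at all: it lists it among ``well-known fundamental results from analysis and linear algebra'' and uses it as a black box in the soundness proof of the differential ghost axiom DG. Your argument is therefore a self-contained replacement rather than a variant of something the paper does.

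Your route is the standard one:
\begin{enumerate}
\item prove by induction that the first-row cofactor expansion is multilinear and alternating in the columns;
\item show the kernel is trivial by substituting a linear dependence among the columns;
\item obtain surjectivity from rank--nullity.
\end{enumerate}
Every step checks out, including the sign bookkeeping. After moving one column across $q-p-1$ others, the two surviving terms carry $(-1)^{1+p}(-1)^{q-p-1}=(-1)^q$ and $(-1)^{1+q}$, with $a_{1p}=a_{1q}$, so they cancel. Your reorganization in the last paragraph is also sound: handle adjacent equal columns first, derive the sign flip under adjacent swaps, then reduce the general case. It needs only the adjacent case of the induction hypothesis in dimension $n-1$, since the two equal columns remain adjacent in each minor $A_{1j}$ with $j\notin\{p,p+1\}$.

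Beyond self-containment, your approach proves the invertibility criterion for exactly the recursively defined cofactor expansion of Definition~\ref{def:det}. The paper instead relies, in Lemma~\ref{lem:det_correct}, on an unproved identification of that recursion with the textbook determinant; for this particular use, your proof makes that identification unnecessary. The paper's citation buys only brevity.

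Your existence step still leans on rank--nullity, which is itself standard. The Cramer's-rule alternative you mention would avoid it, but, as you note, it requires the full adjugate identity.
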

\begin{lemma}[Continuity of the matrix inverse]
\label{lem:matrix_inverse_continuity}
Let $J \subseteq \mathbb{R}$ be an interval, and let
$\mA \colon J \to \mathbb{R}^{n \times n}$ be a continuous mapping such that
$\det(\mA(t)) \neq 0$ for all $t \in J$.
Then the mapping
\[
t \mapsto \mA(t)^{-1}
\]
from $J$ to $\mathbb{R}^{n \times n}$ is continuous.
\end{lemma}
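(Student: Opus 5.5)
The plan is to write the inverse by the adjugate (Cramer) formula, which is built from polynomials in the matrix entries. Continuity then follows from the facts that polynomials are continuous and that division by a nonvanishing continuous function is continuous. No analysis beyond elementary continuity is needed. The argument is pointwise in $t$ and does not use that $J$ is an interval.

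First, for each $t \in J$, the hypothesis $\det(\mA(t)) \neq 0$ together with Theorem~\ref{thm:det_unique_solution} shows that $x \mapsto \mA(t)x$ is a bijection of $\mathbb{R}^n$. Hence $\mA(t)^{-1}$ exists. Next, I recall the adjugate identity
\[
\mA\,\mathrm{adj}(\mA) = \mathrm{adj}(\mA)\,\mA = \det(\mA)\, I,
\qquad
\mathrm{adj}(\mA)_{ij} = (-1)^{i+j}\det(\mA_{ji}),
\]
where $\mA_{ji}$ is $\mA$ with row $j$ and column $i$ deleted. This is the Laplace expansion, with the cofactor recursion used in Definition~\ref{def:det}. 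It is the step I expect to need the most care, but only because it is a bookkeeping fact of linear algebra: the row-$i$ expansion gives the diagonal entries, and the vanishing of the off-diagonal entries comes from expanding a matrix with two equal rows. I would cite it as standard rather than reprove it. Dividing the identity by $\det(\mA(t)) \neq 0$ gives
\[
\mA(t)^{-1} = \frac{1}{\det(\mA(t))}\,\mathrm{adj}(\mA(t)),
\]
by uniqueness of the inverse.

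Finally, I show each entry of the right-hand side is continuous in $t$. The maps $t \mapsto a_{ij}(t)$ are continuous because $\mA$ is continuous into $\mathbb{R}^{n\times n}$ and coordinate projections are continuous. The determinant and every cofactor $\det(\mA_{ji})$ are polynomials in the entries, by induction on the recursion in Definition~\ref{def:det}. Sums and products of continuous real functions are continuous, so $t \mapsto \det(\mA(t))$ and $t \mapsto \mathrm{adj}(\mA(t))_{ij}$ are continuous on $J$. Since $\det(\mA(t))$ never vanishes on $J$, the function $t \mapsto 1/\det(\mA(t))$ is continuous. Thus every entry of $\mA(t)^{-1}$ is a product of continuous functions, and the matrix-valued map is continuous, since continuity into $\mathbb{R}^{n\times n}$ is equivalent to entrywise continuity.
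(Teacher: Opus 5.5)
Your proof is correct. The paper gives no proof of this lemma to compare against: it lists it, unproved, among the ``well-known fundamental results from analysis and linear algebra'' used in the soundness proof of axiom DG. Your adjugate argument is the standard justification that the paper takes for granted.

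Two minor remarks, neither of which affects validity:
\begin{itemize}
\item The appeal to Theorem~\ref{thm:det_unique_solution} is redundant, since the adjugate identity already exhibits $\frac{1}{\det(\mA(t))}\,\mathrm{adj}(\mA(t))$ as a two-sided inverse.
\item Definition~\ref{def:det} is the \emph{syntactic} determinant of matrices of terms. For a real-matrix-valued function you should appeal to the ordinary cofactor recursion for real determinants, which matches the syntactic one by Lemma~\ref{lem:det_correct}. This is what makes $\det$ and the cofactors polynomial in the entries.
\end{itemize}
You are also right that the interval structure of $J$ plays no role; the argument works for any domain on which $\mA$ is continuous and $\det(\mA(\cdot))$ never vanishes.
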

\begin{theorem}[\cite{Walter_1998}]
\label{thm:linear_ode_existence}
Let $J \subseteq \mathbb{R}$ be an interval, let
$\mA \colon J \to \mathbb{R}^{n \times n}$ and
$\vvB \colon J \to \mathbb{R}^n$ be continuous functions, and let $t_0 \in J$
and $\vvX_0 \in \mathbb{R}^n$.
Then the initial value problem
\[
\vvXp(t) = \mA(t)\vvX(t) + \vvB(t), \qquad \vvX(t_0) = \vvX_0
\]
has a unique solution $\vvX \colon I \to \mathbb{R}^n$.
\end{theorem}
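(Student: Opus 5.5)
The plan is to prove existence and uniqueness on the whole interval $J$ together, by Picard iteration on compact subintervals followed by a patching argument. The conclusion is read with the solution defined on all of $J$.

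First I reformulate the problem as an integral equation. A continuous function $\vvX \colon J \to \R^n$ solves the initial value problem if and only if
\[
\vvX(t) = \vvX_0 + \int_{t_0}^{t} \bigl(\mA(s)\vvX(s) + \vvB(s)\bigr)\,ds \quad \text{for all } t \in J .
\]
The forward direction is integration. The converse follows from the fundamental theorem of calculus, because the integrand is continuous. In particular, any continuous solution of the integral equation is automatically $C^1$.

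Next comes existence on a compact subinterval. Fix a compact interval $K \subseteq J$ with $t_0 \in K$, and set $L = \max_{K}\|\mA(t)\|$, which is finite by continuity. Define Picard iterates
\[
\vvX_0(t) \equiv \vvX_0, \qquad \vvX_{k+1}(t) = \vvX_0 + \int_{t_0}^{t} \bigl(\mA(s)\vvX_k(s) + \vvB(s)\bigr)\,ds .
\]
Let $M = \max_K |\vvX_1 - \vvX_0|$. By induction on $k$,
\[
|\vvX_{k+1}(t) - \vvX_k(t)| \le M \frac{L^k |t-t_0|^k}{k!} .
\]
Hence $\sum_k \|\vvX_{k+1}-\vvX_k\|_{\infty,K}$ is dominated by $M e^{L|K|}$, so the iterates converge uniformly on $K$. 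Passing to the limit under the integral, which is allowed by uniform convergence and boundedness of $\mA$ on $K$, gives a solution of the integral equation on $K$.

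I expect this factorial estimate to be the main point. It is what makes the linear case global: the contraction argument does not have to be restricted to a short time interval depending on the data, as it would for nonlinear Picard–Lindelöf. Linearity supplies the global Lipschitz constant $L$ on all of $K$.

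For uniqueness on $K$, let $\vvX$ and $\tilde{\vvX}$ be two solutions and set $u(t) = |\vvX(t) - \tilde{\vvX}(t)|$. Then
\[
u(t) \le L\left|\int_{t_0}^t u(s)\,ds\right| .
\]
By Gronwall's inequality, or by iterating the bound to get $u \le \|u\|_\infty L^k|t-t_0|^k/k! \to 0$, we obtain $u \equiv 0$.

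Finally I extend to all of $J$. Write $J$ as an increasing union of compact intervals $K_1 \subseteq K_2 \subseteq \cdots$, each containing $t_0$. This is possible even if $J$ is open or half-open. By uniqueness, the solution on $K_{j+1}$ restricts to the solution on $K_j$. The pointwise union therefore defines a function on $J$. It is $C^1$ and satisfies the equation, because every $t \in J$ lies in the relative interior of some $K_j$ or is an endpoint of $J$, where one-sided derivatives are as in the earlier remark. Uniqueness on $J$ follows by applying the compact-interval uniqueness to each $K_j$.
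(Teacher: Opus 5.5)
Your proof is correct. The integral reformulation, the bound $|x_{k+1}(t)-x_k(t)|\le M L^k|t-t_0|^k/k!$ on a compact $K\ni t_0$, the iterated-bound uniqueness argument, and the patching over a compact exhaustion of $J$ together give a unique $C^1$ solution on all of $J$. Reading the undefined $I$ in the statement as $J$ is the right interpretation.

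The paper gives no proof of its own. It takes the theorem from the cited textbook and uses it only once, in the soundness proof of DG, on the compact trace interval $[0,T]$. What matters there is that the ghost solution exists on the whole interval, not just near $0$. For that use your compact-interval step already suffices, and the exhaustion step is needed only for general (open or half-open) $J$.

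Relative to a bare citation, your argument makes the key point explicit. Linearity gives a Lipschitz constant $L$ that is uniform on each compact subinterval, so the time interval never has to shrink as it would in nonlinear Picard--Lindel\"of. Another common textbook route is to show that the integral operator is a contraction on $C(K)$ in a weighted norm $\max_K |u(t)|e^{-\alpha|t-t_0|}$ with $\alpha>L$. That yields existence and uniqueness in one step. Your factorial estimate is the explicit, iterate-level form of the same mechanism, and it also gives a concrete error bound.
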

\begin{theorem}[Implicit Function Theorem]
Let \( U \subseteq \mathbb{R}^n \times \mathbb{R}^m \) be open, and let \( F \in C^k(U; \mathbb{R}^m) \) for some \( k \geq 1 \). Suppose \( (u_0, v_0) \in U \) satisfies \( F(u_0, v_0) = 0 \), and the Jacobian \( J_y F(u_0, v_0) \in \mathbb{R}^{m \times m} \) with respect to \(y\) is invertible.

Then there exist open neighborhoods \( V \subseteq \mathbb{R}^n \) of \( u_0 \) and \( W \subseteq \mathbb{R}^m \) of \( v_0 \), and a unique \( C^k \) function \( g \colon V \to W \) such that
\[
F(u, g(u)) = 0 \quad \text{for all } u \in V,
\]
and for all \( (u,y) \in V \times W \),
\[
F(u,y) = 0 \iff y = g(u).
\]
Moreover, the derivative of \( g \) is given by
\[
Dg(u) = -\big(J_y F(u, g(u))\big)^{-1} J_x F(u, g(u)).
\]
\end{theorem}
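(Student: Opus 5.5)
This is a classical result that the paper only quotes, so I would give the standard proof rather than anything specific to \dAL: reduce to a contraction-mapping argument for fixed $u$, then bootstrap regularity. Throughout I read the variables of $F$ as $(x,y)\in\mathbb{R}^n\times\mathbb{R}^m$, with $u$ and $v$ denoting particular values of $x$ and $y$. Set $L = J_y F(u_0,v_0)$, which is invertible by hypothesis. For fixed $u$, define
\[
\Phi_u(y) = y - L^{-1}F(u,y).
\]
Then $F(u,y)=0$ holds if and only if $y$ is a fixed point of $\Phi_u$. Since $F\in C^1$, we have $D_y\Phi_u(y) = I - L^{-1}J_yF(u,y)$, and this vanishes at $(u_0,v_0)$. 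By continuity I choose $r,\rho>0$ such that $\|D_y\Phi_u(y)\|\le \tfrac12$ whenever $\|u-u_0\|<\rho$ and $\|y-v_0\|\le r$, and such that $J_yF(u,y)$ stays invertible there (using continuity of the inverse, Lemma~\ref{lem:matrix_inverse_continuity}).

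\textbf{Existence and uniqueness.} By the mean value inequality (Theorem~\ref{thm:mean-value} applied along segments), $\Phi_u$ is a $\tfrac12$-contraction on the closed ball $\bar B_r(v_0)$. Next I shrink $\rho$ so that $\|\Phi_u(v_0)-v_0\| = \|L^{-1}F(u,v_0)\| < r/2$; this is possible because $F(u_0,v_0)=0$ and $F$ is continuous. Then $\Phi_u$ maps $\bar B_r(v_0)$ into itself. Banach's fixed point theorem now gives a unique fixed point $g(u)$ in the ball. Taking $V=B_\rho(u_0)$ and $W=B_r(v_0)$ (after a harmless adjustment so that $W$ is open and contains the fixed points) yields the equivalence $F(u,y)=0 \iff y=g(u)$ on $V\times W$. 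The same equivalence gives uniqueness of $g$ as a function $V\to W$.

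\textbf{Regularity.} This is the main obstacle, because differentiability of $g$ is not automatic from the fixed-point construction. I would proceed in three steps.

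First, $g$ is Lipschitz. Subtracting the fixed-point equations gives
\[
\|g(u)-g(\tilde u)\| \le \tfrac12\|g(u)-g(\tilde u)\| + \|L^{-1}\|\,\|F(u,g(\tilde u))-F(\tilde u,g(\tilde u))\|,
\]
so $g$ is Lipschitz, and in particular continuous.

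Second, $g$ is differentiable. Expand $0 = F(u+h,g(u+h)) - F(u,g(u))$ to first order using $C^1$-ness of $F$ together with the Lipschitz bound on $g(u+h)-g(u)$. Since $J_yF(u,g(u))$ is invertible, this yields
\[
g(u+h)-g(u) = -\big(J_yF(u,g(u))\big)^{-1}J_xF(u,g(u))\,h + o(\|h\|),
\]
which is the claimed formula for $Dg$.

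Third, $g$ is $C^k$. The right-hand side of the derivative formula is a composition of $g$, the $C^{k-1}$ partial derivatives of $F$, and matrix inversion, which is smooth on invertible matrices. An induction on the order $j\le k$ shows that if $g\in C^{j-1}$ then $Dg\in C^{j-1}$, so $g\in C^j$.
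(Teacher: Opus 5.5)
Your proof is correct. It is the standard Banach fixed-point proof of the Implicit Function Theorem. The paper has no proof of its own to compare against: it lists the theorem among the ``well-known fundamental results from analysis'' and uses it only as a black box in the proof of Lemma~\ref{lem:implicit_ode}. Supplying the textbook argument is therefore appropriate.

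Only minor tightening is needed:
\begin{itemize}
\item Invertibility of $J_yF(u,y)$ near $(u_0,v_0)$ follows from continuity of the determinant, or directly from your bound $\|I-L^{-1}J_yF(u,y)\|\le\frac12$ via a Neumann series. It does not follow from Lemma~\ref{lem:matrix_inverse_continuity}, which presupposes invertibility and is stated only for maps on an interval. For the $C^k$ step, the fact you actually need is smoothness of inversion on invertible matrices, which you correctly invoke there.
\item Choose $r,\rho$ so that $\bar B_\rho(u_0)\times\bar B_r(v_0)\subseteq U$. This guarantees $F$ is defined on $V\times W$, and it bounds $J_xF$ on that compact set, which your Lipschitz step implicitly uses.
\item Since $\|\Phi_u(y)-v_0\|\le\frac12\|y-v_0\|+\|L^{-1}F(u,v_0)\|<r$, the fixed point already lies in the open ball. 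So $W=B_r(v_0)$ works with no adjustment.
\item Theorem~\ref{thm:mean-value} is scalar. The contraction estimate for the vector-valued $\Phi_u$ comes from applying it to $t\mapsto\langle w,\Phi_u(\tilde y+t(y-\tilde y))\rangle$, where $w$ is the unit vector in the direction of $\Phi_u(y)-\Phi_u(\tilde y)$.
\end{itemize}
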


The following local existence lemma is well-known, its proof is provided for completeness.

\begin{lemma}[Local solution of an implicit ODE]\label{lem:implicit_ode}
Let $\ttF:\mathbb{R}^n\times\mathbb{R}^n\to\mathbb{R}^n$ be of class $C^1$.
Assume there exist $\ttU_0,\ttV_0\in\mathbb{R}^n$ such that \(
\ttF(\ttU_0,\ttV_0)=0 \)
and the Jacobian matrix \(J_{\ttV}\ttF(\ttU_0,\ttV_0)\) is non-singular.

Then there exist open sets $U,V\subseteq\mathbb{R}^n$ containing $\ttU_0$ and $\ttV_0$ respectively,
and a constant \(\varepsilon > 0\) such that the initial value problem
\[
\ttU(0)=\ttU_0,\qquad \ttU'(0)=\ttV_0,\qquad \ttF\bigl(\ttU(t),\ttU'(t)\bigr)=0 \ \ (t\in J)
\]
admits a unique solution $\ttU\in C^2((-\varepsilon,\varepsilon),\mathbb{R}^n)$ satisfying $\ttU(t)\in U$ and $\ttU'(t)\in V$ for all \(t \in (-\varepsilon,\varepsilon)\).
\end{lemma}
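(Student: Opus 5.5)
The plan is to reduce the implicit equation to an explicit ODE by means of the Implicit Function Theorem, and then invoke standard ODE existence and uniqueness. Since $\ttF$ is $C^1$, $\ttF(\ttU_0,\ttV_0)=0$, and $J_{\ttV}\ttF(\ttU_0,\ttV_0)$ is non-singular, the Implicit Function Theorem (with $u=\ttU$ as the parameter and $y=\ttV$ as the solved variable, $k=1$) gives open neighborhoods $U\ni\ttU_0$ and $V\ni\ttV_0$ and a unique $C^1$ map $g\colon U\to V$. These satisfy $g(\ttU_0)=\ttV_0$ and, for all $(u,v)\in U\times V$, the equivalence $\ttF(u,v)=0 \iff v=g(u)$. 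This equivalence is the key structural fact: for curves staying in $U$ with derivative in $V$, the implicit equation is exactly the explicit ODE $\ttU'=g(\ttU)$.

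\textbf{Existence.} Consider the explicit initial value problem $\ttU'(t)=g(\ttU(t))$, $\ttU(0)=\ttU_0$. Since $g$ is $C^1$ on the open set $U$, it is locally Lipschitz there. Picard--Lindel\"of therefore yields $\varepsilon>0$ and a unique $C^1$ solution on $(-\varepsilon,\varepsilon)$ with values in $U$. Shrinking $\varepsilon$ if necessary, continuity ensures $\ttU(t)\in U$ throughout. Then $\ttU'=g\circ\ttU$ is a composition of $C^1$ maps, so $\ttU'$ is $C^1$ and hence $\ttU\in C^2$. Moreover $\ttU'(t)=g(\ttU(t))\in V$ because $g$ maps into $V$, and $\ttU'(0)=g(\ttU_0)=\ttV_0$. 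By the equivalence, $\ttF(\ttU(t),\ttU'(t))=0$ for all $t$.

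\textbf{Uniqueness.} Let $\tilde\ttU$ be any $C^2$ (indeed any $C^1$) solution on $(-\varepsilon,\varepsilon)$ with $\tilde\ttU(t)\in U$, $\tilde\ttU'(t)\in V$, and $\ttF(\tilde\ttU,\tilde\ttU')=0$. The equivalence then forces $\tilde\ttU'(t)=g(\tilde\ttU(t))$, so $\tilde\ttU$ solves the same explicit IVP. Uniqueness in Picard--Lindel\"of, or a Gr\"onwall estimate using the local Lipschitz constant of $g$ on a compact neighborhood, gives $\tilde\ttU=\ttU$.

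The main obstacle is bookkeeping rather than depth. One must choose $\varepsilon$ small enough that the trajectory stays inside the IFT neighborhood $U$, on which $g$ is defined and the equivalence holds. One should also make explicit that the uniqueness claim is relative to solutions constrained to $U\times V$. Outside that region the implicit equation may admit other branches, for instance $\ttU'$ jumping to another root. This is exactly why the statement includes the conditions $\ttU(t)\in U$ and $\ttU'(t)\in V$.
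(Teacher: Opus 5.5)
Your proposal is correct and follows essentially the same route as the paper. Both apply the Implicit Function Theorem to obtain a $C^1$ map $g$ with $g(\ttU_0)=\ttV_0$, solve the explicit ODE $\ttU'=g(\ttU)$ by Picard--Lindel\"of, and get $C^2$ regularity from $\ttU'=g\circ\ttU$; your use of the local equivalence $\ttF(u,v)=0 \iff v=g(u)$ on $U\times V$ to carry uniqueness over from the explicit to the implicit problem is a step the paper leaves unstated, so your version is slightly more complete.
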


\begin{proof}
  By construction, $\ttF(\ttU_0,\ttV_0)=0$.
    Moreover, by assumption the Jacobian of $\ttF$ with respect to its second argument is non-singular at $(\ttU_0,\ttV_0)$, i.e.,
    \[
      \det\bigl(J_{\ttV}\ttF(\ttU_0,\ttV_0)\bigr) \neq 0 .
    \]

    From the Implicit Function Theorem, there exist open sets $U,V \subseteq \mathbb{R}^n$
    with $\ttU_0 \in U$ and $\ttV_0 \in V$, and a unique function $\ttG \in C^1(U,V)$
    such that
    \[
      \ttF(\ttU,\ttG(\ttU)) = 0 \qquad \text{for all } \ttU \in U,
    \]
    and $\ttG(\ttU_0)=\ttV_0$.

    Consider the ordinary differential equation
    \begin{equation}\label{eq:ag_ode}
      \ttU'(t) = \ttG(\ttU(t)), \qquad \ttU(0)=\ttU_0 .
    \end{equation}
    Since $\ttG \in C^1(U,V)$, it is locally Lipschitz on $U$.
    Hence, by the Picard--Lindel\"of theorem \cite{Walter_1998} there exists an open interval
    $J$ containing $0$ and a unique solution $\ttU \in C^1(J,U)$ of~\eqref{eq:ag_ode}.

    Because both $\ttG$ and $\ttU$ are $C^1$, the chain rule implies that the composition
    $t \mapsto \ttG(\ttU(t))$ is of class $C^1$ on $J$.
    Using~\eqref{eq:ag_ode}, we obtain $\ttU'(\cdot) = \ttG(\ttU(\cdot)) \in C^1(J,\mathbb{R}^n)$.
\end{proof}

The proof of soundness for $\dAL$ also uses the following lemmas. 

\begin{lemma}[Regular Differential]
  \label{lem:diff}
  Let \(e\) be a term with $\vvXp \notin \tE$. For any \(\vvX\)-regular trace \(\flA \in \flRegular\)
  of duration \(T_\flA > 0\), the mapping \(t \mapsto \semS{\tE}{\flAt}\) is continuously differentiable on \(\flInt\).
  Furthermore, for all \(t \in \flInt\), it holds that: 
  \[
    \ddt \semS{\tE}{\flAt} = \semS{\D{\tE}}{\flAt}.
  \]
\end{lemma}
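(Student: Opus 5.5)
The plan is a structural induction on the differential-free term $\tE$, proving both claims together: $t \mapsto \semS{\tE}{\flAt}$ is $C^1$ on $\flInt$, and its derivative equals $\semS{\D{\tE}}{\flAt}$. One point simplifies everything. Since $\vvXp \notin \tE$, the only variables of $\tE$ whose values can change along $\flA$ are those in $\vvX$. Every other variable $y \notin \vvX \cup \vvXp$ is constant along $\flA$ by condition 3 of Definition~\ref{def:x-regular}.

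The base cases go as follows.
\begin{itemize}
  \item For a constant $c$, the map $t \mapsto c$ is $C^1$ with derivative $0$. By definition (4), $\semS{\D{c}}{\flAt}=\sum_v \flAt(v')\cdot 0 = 0$, because the partial derivative of a constant function of $r$ vanishes.
  \item For a variable $x_i \in \vvX$, conditions 1 and 2 of Definition~\ref{def:x-regular} say that $t\mapsto \flAt(x_i)$ is $C^1$ with derivative $\flAt(x_i')$. Definition (4) gives $\semS{\D{x_i}}{\flAt}=\flAt(x_i')$, since only $v=x_i$ contributes a nonzero partial derivative.
  \item For a variable $y \notin \vvX\cup\vvXp$, the map is constant, so its derivative is $0$. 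We need $\semS{\D{y}}{\flAt} = \flAt(y')=0$. This is the one delicate point. Condition 3 only makes $\flAt(y')$ constant, equal to $\flA(0)(y')$, which need not be $0$. So either the statement is meant for terms whose free variables other than $\vvX$ do not have their differentials in play, or the paper implicitly treats such $y$ as parameters with $y'$ irrelevant.
\end{itemize}
I would resolve this by reading the lemma for terms $\tE$ over $\vvX$ and parameters. I would then check, against how the lemma is used in \lref{ax:dileq} and \lref{ax:dhc}, whether the mismatch matters there. If needed, I would restrict the lemma to $\FV{\tE}\subseteq \vvX$ plus variables $y$ for which $\flA(0)(y')=0$. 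This base case is the main obstacle: it is a semantic subtlety, not a computational one.

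The inductive cases use elementary calculus. For $\tE+\tG$, the sum of $C^1$ functions is $C^1$ and differentiation is linear. Definition (4) is also linear in $\tE$, because partial derivatives are linear, so $\semS{\D{(\tE+\tG)}}{\sO}=\semS{\D{\tE}}{\sO}+\semS{\D{\tG}}{\sO}$ holds pointwise. For $\tE\cdot\tG$, the product of $C^1$ functions is $C^1$ and the product rule applies. On the semantic side, the product rule for $\frac{\partial}{\partial r}$ inside definition (4) gives $\semS{\D{(\tE\tG)}}{\sO}=\semS{\D{\tE}}{\sO}\semS{\tG}{\sO}+\semS{\tE}{\sO}\semS{\D{\tG}}{\sO}$. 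This is just the soundness of axioms \lref{ax:diff_plus} and \lref{ax:diff_mul}. The induction hypothesis then closes both cases.

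Continuity of the derivative follows directly. $\semS{\D{\tE}}{\flAt}$ is a polynomial in the values $\flAt(x_i)$, $\flAt(x_i')$ and the constant parameters, and each of these is continuous in $t$ by condition 1. The endpoints $0$ and $\flTA$ need no separate treatment: the remark after Definition~\ref{def:x-regular} defines derivatives there as one-sided, and the sum and product rules hold for one-sided derivatives as well.
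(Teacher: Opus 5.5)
Your induction is the argument the paper has in mind. Its proof is a one-line appeal to structural induction, deferring to the \dL{} differential lemma, plus the observation that $\semS{\D{\tE}}{\flAt}$ is continuous. Your constant, $\vvX$-variable, sum, product, continuity and one-sided-endpoint steps are all fine. The base case you flag, however, is not merely delicate: it makes the lemma false as stated, and the paper's proof glosses over it. Take $\vvX = (x)$, $\tE \equiv y$, and the trace $\flA$ of duration $1$ with $\flAt(x) = t$, $\flAt(x') = 1$, $\flAt(y) = 0$, $\flAt(y') = 1$, and all other variables constant. This trace is $\vvX$-regular and $x' \notin \tE$. Yet $t \mapsto \semS{y}{\flAt}$ has derivative $0$, while $\semS{\D{y}}{\flAt} = \flAt(y') = 1$. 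Condition 3 of $\vvX$-regularity only freezes $y'$ at its initial value, and that value is arbitrary.

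This also answers your question about \lref{ax:dileq} and \lref{ax:dhc}: the mismatch does matter there, with both side conditions $x' \notin \tE$ satisfied. Start from $x = y = 0$, $x' = y' = 1$ and use the same trace. It is a run of $\dap{x}{\D{x - y} \leq 0}$, since the constraint evaluates to $x' - y' = 0$, and it ends with $x - y = 1 > 0$. This refutes \lref{ax:dileq} for $\tE \equiv x - y$. It is also a run of $\dap{x}{x \cdot y = 0}$ that starts with $x'y + xy' = 0$ and ends with $x'y + xy' = 1$. This refutes \lref{ax:dhc} for $\tE \equiv x \cdot y$.

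Your first reading, terms over $\vvX$ and ``parameters'', does not rescue the lemma. The term language has only rational constants, so parameters are state variables, and $\D{y}$ evaluates to the frozen value of $y'$. Your second option is the right repair: add the hypothesis $\flA(0)(y') = 0$ for every $y \in \FV{\tE} \setminus \vvX$, which is vacuous if $\FV{\tE} \subseteq \vvX$. Since such $y'$ are constant along $\flA$, your base case then holds at every $t$, and the rest of your induction goes through unchanged. The axioms need a matching side condition or antecedent. The paper silently relies on exactly this convention in the pendulum example, where it writes $(x^2 + y^2 - l^2)' \equiv 2xx' + 2yy'$ and drops the term $-2ll'$ for the parameter $l \notin \vvX$.
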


\begin{proof}
  First note that $C^1$ regularity follows from the equivalence $\ddt \semS{\tE}{\flAt} = \semS{\D{\tE}}{\flAt}$, as the RHS is a term and therefore continuous as $\flA$ is continuous. This equivalence can be established by inducting on the structure of $\tE$ and the axiomatization of differential terms in Section \ref{sec:proof_calculus}, a detailed proof can be found in earlier work axiomatizing differentials in \dL \cite{DBLP:journals/jar/Platzer17}.
\end{proof}

\begin{lemma}
  \label{lem:det_correct}
  Let \(\mA \in \Terms^{n \times n}\) be a square matrix of terms.
  Then, for all states \(\sO \in \States\),
  \(\det{\semS{\mA}{\sO}} =  \semS{\mdet{\mA}}{\sO}.\)
\end{lemma}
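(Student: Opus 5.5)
We prove the claim by induction on the size $n$ of the matrix, following the recursive structure of $\mdet$ in Definition~\ref{def:det}. On the semantic side we use the Laplace (cofactor) expansion of the real determinant along the first row. Together with the compositional term semantics, this makes the two sides match term by term.

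\emph{Base case $n=1$.} Here $\mdet([a_{11}]) \equiv a_{11}$. By the componentwise extension of the valuation, $\semS{[a_{11}]}{\sO} = [\semS{a_{11}}{\sO}]$. The determinant of a $1\times 1$ real matrix is its single entry, so $\det \semS{[a_{11}]}{\sO} = \semS{a_{11}}{\sO} = \semS{\mdet([a_{11}])}{\sO}$.

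\emph{Inductive step $n>1$.} Assume the claim for all $(n-1)\times(n-1)$ term matrices. The term $\mdet(\mA)$ is built from $+$ and $\cdot$, where the signs $(-1)^{1+j}$ are rational constants. By clauses (2) and (3) of the term semantics,
\[
\semS{\mdet(\mA)}{\sO} = \sum_{j=1}^n (-1)^{1+j}\, \semS{a_{1j}}{\sO}\, \semS{\mdet(\mA_{1j})}{\sO}.
\]
Deleting row $1$ and column $j$ commutes with componentwise evaluation, so $\semS{\mA_{1j}}{\sO} = (\semS{\mA}{\sO})_{1j}$. The induction hypothesis then gives $\semS{\mdet(\mA_{1j})}{\sO} = \det (\semS{\mA}{\sO})_{1j}$. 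Substituting, the right-hand side is exactly the cofactor expansion of $\det \semS{\mA}{\sO}$ along its first row. Laplace's theorem for real matrices identifies this expansion with the determinant.

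The argument is routine. The only points needing care are bookkeeping ones:
\begin{itemize}
\item The signs $(-1)^{1+j}$ and the subtraction must be read as the term constants $1$ or $-1$ multiplied in, which is legitimate since $c \in \Q$ is allowed. Their semantics agree with real arithmetic.
\item Minor-taking must commute with evaluation.
\end{itemize}
No differential variables are involved, so clause (4) never arises.
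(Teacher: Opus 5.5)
Your proof is correct and follows the paper's approach. The paper simply notes that $\mdet$ is defined by first-row cofactor expansion, and your induction on $n$, using Laplace's theorem and the compositional semantics of $+$ and $\cdot$, spells that argument out. One harmless imprecision: the entries $a_{ij}$ may contain differential variables or differentials (e.g.\ the Jacobian with respect to the differential variables in the algebraic ghost axiom), but this does not affect your argument, because each entry only enters as an opaque value $\semS{a_{ij}}{\sO}$.
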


\begin{proof}
  This follows directly from the fact that $\mdet{\mA}$ was defined (Definition \ref{def:det}) via its cofactor expansion. 
\end{proof}

\begin{lemma}\label{lem:c_flow}
  Let \(\ttE\) be a vectorial term.
  For any state \(\sO \in \sem{\ttE = \ttZero \Land \mdet{\jacobian[\vvXp][\ttE]}\neq 0}\), there exists an interval \(J = (-\epsilon,\epsilon)\) for some \(\varepsilon > 0\) and a unique \(C^2\) trace \(\flC : J \rightarrow \States\) such that
  \begin{enumerate}
    \item \(\flC(0) = \sO\),
    \item \(\flC(J) \subseteq \sem{\ttE = \ttZero}\),
    \item \(\flC(t)(v) = \sO(v)\) for \(v \not\in \vvZ \cup \vvZp\).
  \end{enumerate}
\end{lemma}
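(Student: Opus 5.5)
The plan is to reduce the statement to Lemma~\ref{lem:implicit_ode}, applied to the real function obtained from $\ttE$ by freezing every variable outside $\vvX \cup \vvXp$ at its value in $\sO$. I read the variables in item~3 as $\vvX \cup \vvXp$ (the statement writes $\vvZ$, which I take as the generic variable tuple of the ghost axiom). I also take "trace" to mean an $\vvX$-regular one, i.e.\ with $\flC(t)(\vvXp) = \ddt \flC(t)(\vvX)$.

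\emph{Step 1 (setting up the implicit ODE).} Let $n = |\vvX|$ and define $\ttF : \R^n \times \R^n \to \R^n$ by
\[
\ttF(u,v) = \semS{\ttE}{\sO_{u,v}},
\]
where $\sO_{u,v}$ is $\sO$ with $\vvX$ set to $u$ and $\vvXp$ set to $v$. Since $\ttE$ is built from $+$, $\cdot$, constants and variables, $\ttF$ is polynomial, hence $C^\infty$ and in particular $C^1$. Any differentials $\D{\cdot}$ occurring in $\ttE$ also unfold to polynomials in $\vvX, \vvXp$ and the frozen variables.

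Next, one checks by induction on terms that the syntactic partial derivative of Definition~\ref{def:partial_diff} evaluates to the semantic one. This gives $\semS{\jacobian[\vvXp][\ttE]}{\sO_{u,v}} = J_v\ttF(u,v)$. Combined with Lemma~\ref{lem:det_correct}, the hypothesis $\sO \in \sem{\mdet{\jacobian[\vvXp][\ttE]} \neq 0}$ becomes $\det J_v \ttF(\ttU_0,\ttV_0) \neq 0$, where $\ttU_0 = \sO(\vvX)$ and $\ttV_0 = \sO(\vvXp)$. Likewise $\sO \in \sem{\ttE = \ttZero}$ gives $\ttF(\ttU_0,\ttV_0) = 0$.

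\emph{Step 2 (existence).} Lemma~\ref{lem:implicit_ode} now yields $\varepsilon > 0$, neighbourhoods $U, V$, and a $C^2$ map $\ttU : (-\varepsilon,\varepsilon) \to U$ with $\ttU(0) = \ttU_0$, $\ttU'(0) = \ttV_0$ and $\ttF(\ttU(t),\ttU'(t)) = 0$. Define $\flC(t) = \sO_{\ttU(t),\ttU'(t)}$. Then:
\begin{enumerate}
  \item $\flC(0) = \sO$;
  \item $\flC(t) \in \sem{\ttE = \ttZero}$ by the definition of $\ttF$;
  \item variables outside $\vvX \cup \vvXp$ keep their $\sO$-values by construction;
  \item differential consistency holds because the $\vvXp$-component is exactly $\ttU'$;
  \item $\flC$ is $C^2$ in $\vvX$ because $\ttU$ is $C^2$.
\end{enumerate}

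\emph{Step 3 (uniqueness), the expected main obstacle.} A global uniqueness claim is not literally true, since $\ttF(u,\cdot) = 0$ may have several roots $v$. Uniqueness therefore has to be read relative to the neighbourhood $U \times V$, as in Lemma~\ref{lem:implicit_ode}, or for sufficiently small $\varepsilon$.

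So let $\flC_2$ be another such trace. Its $\vvX$-part $\ttU_2$ is $C^1$, and $(\ttU_2,\ttU_2')(0) = (\ttU_0,\ttV_0)$. By continuity, after shrinking $\varepsilon$, the pair $(\ttU_2(t),\ttU_2'(t))$ stays in the neighbourhood $U \times V$ given by the Implicit Function Theorem. There, $\ttF(u,v) = 0 \iff v = \ttG(u)$, so $\ttU_2' = \ttG(\ttU_2)$. Picard--Lindel\"of uniqueness for this locally Lipschitz ODE gives $\ttU_2 = \ttU$, and hence $\ttU_2' = \ttU'$ as well. Since both traces agree with $\sO$ on all remaining variables, $\flC_2 = \flC$ on the shrunk interval.
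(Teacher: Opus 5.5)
Your proposal is correct and follows the paper's proof. Like the paper, you freeze all variables outside $\vvX\cup\vvXp$ to obtain $\ttF$, transfer the determinant hypothesis via Lemma~\ref{lem:det_correct}, and invoke Lemma~\ref{lem:implicit_ode}, and you are more explicit than the paper about the syntactic-versus-semantic Jacobian, the construction of $\flC$, and uniqueness. In Step~3, shrinking $\varepsilon$ separately for each competitor only gives local uniqueness, and no shrinking is needed. Lemma~\ref{lem:implicit_ode} keeps $(\ttU(t),\ttU'(t))$ inside $U\times V$ for every $t\in J$, so the set of times where the two solutions agree is open and closed in $J$, and uniqueness holds on all of $J$ (even among $C^1$ competitors, which is the form the soundness proof of the AG axiom actually uses).
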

\begin{proof}
   Let $\ttU_0 = \sO(\vvX)$ and $\ttV_0 = \sO(\vvX')$ and
    define the induced semantic mapping $\ttF : \mathbb{R}^n \times \mathbb{R}^n \to \mathbb{R}^n$ by
    \[
      \ttF(\ttU,\ttV)
      \equiv \semS{\ttE}{\sO_{\ttU,\ttV}},
    \]
    where \(\sO_{\ttU,\ttV}\) is the modified sate \(\sO\) such that \(\sO(\vvX) = \ttU\) and \(\sO(\vvXp) = \ttV\).
    By Lemma \ref{lem:det_correct}, we conclude \(\det \mJ_\ttV \ttF(\ttU_0,\ttV_0) = \det \semS{\jacobian[\vvXp][\ttE]}{\sO} = \semS{\mdet \jacobian[\vvXp][\ttE]}{\sO} \neq 0\).
    Applying Lemma \ref{lem:implicit_ode},
    we obtain \(J = (-\varepsilon,\varepsilon)\) for some \(\varepsilon > 0\) on which the initial value problem
    \[
    \ttU(0)=\ttU_0,\quad \ttU'(0)=\ttV_0,\quad \ttF\bigl(\ttU(t),\ttU'(t)\bigr)=0
    \]
    admits a unique solution \(\ttU \in C^2(J,\R^n)\).
\end{proof}

Finally, soundness of $\dAL$ (Theorem \ref{thm: sound axiomatization}) is established by proving that each axiom is semantically valid:
\begin{proof}[Soundness of Refinement Axioms]
  \begin{itemize}
    \item[\lref{ax:r}] Let \(\sO \in \sem{\pA \rfle \pB}\) and \(\sO \in \sem{\fBox[\pB]}\).
    To show \(\sO \in \sem{\fBox}\), fix an arbitrary trace \(\flA \in \semPS\).
    By the refinement assumption, there exists a \(\flB \in \semPS[\pB]\) such that \(\coincide\), which implies \(\flTA = \flTB\).
    Since \(\sO \in \sem{\fBox[\pB]}\), it follows that \(\flB(\flTA) \in \sem{\fP}\).
    By the Coincidence Lemma, it suffices to show \(\coincide[\FV{\fP}][\flA][\flB]\).

    Given the syntactic side condition \(\FV{\fP} \subseteq \vvX \cup \left(\BV{\pA} \cup \BV{\pB}\right)^\complement\), we consider an arbitrary variable \(v \in \FV{\fP}\) and split into two cases:
    \begin{enumerate}
      \item If \(v \in \vvX\): Then, we have \(\flAt(v) = \flB(t)(v)\) for all \(t \in \flInt\), by the refinement assumption.
      \item If \(v \in \left(\BV{\pA} \cup \BV{\pB}\right)^\complement\):
      By the Bound Effect Lemma, both traces preserve the initial state, i.e., \(\flAt(v) = \sO(v)\) and \(\flB(t)(v) = \sO(v)\) for all \(t \in \flInt\).
      Consequently, \(\flAt(v) = \flB(t)(v)\) for all \(t \in \flInt\).
    \end{enumerate}
    In both cases, \(\flA\) and \(\flB\) coincidence. Thus, by a pointwise application of Lemma~\ref{lem:coincidence} we conclude \(\flAT \in \sem{\fP}\), completing the proof.
    \item[\lref{ax:tr}] Let \(\sO \in \sem{\pA \rfle \pB}\) and \(\sO \in \sem{\pB \rfle \pC}\). 
    To show \(\sO \in \sem{\pA \rfle \pC}\), fix an arbitrary trace \(\flA \in \semPS\).
    Our goal is to demonstrate that there exists a trace \(\flC \in \semPS[\pC]\), such that \(\coincide[\vvX][\flA][\flC]\).

    By the first refinement assumption, there exists a trace \(\flB\in \semPS[\pB]\) such that \(\coincide\).
    Invoking the second assumption for this \(\flB\), there exists a trace \(\flC \in \semPS[\pC]\) with \(\coincide[\vvX][\flB][\flC]\).
    Transitivity yields \(\coincide[\vvX][\flA][\flC]\), which establishes the desired result.
  \end{itemize}
\end{proof}

\begin{proof}[Soundness of Differential Axioms]
  We prove the validity of each axiom individually:
  \begin{itemize}
    \item[\lref{ax:dw}] Let \(\sO \in \States\). To show \(\sO \in \sem{\fBox[\dap][\fF]}\), fix an arbitrary trace \(\flA \in \semPS[\dap]\).
    
    By the semantics of DAPs, we have \(\flA(\flInt) \subseteq \sem{\fF}\), which immediately implies \(\flAT \in \sem{\fF}\), as required by the box modality.
    \item[\lref{ax:dileq}]
    Let \(\sO \in \sem{\tE \preceq 0}\).
    To prove \(\sO \in \sem{\fBox[\dap{\vvX}{\D{\tE} \leq 0}][\tE \preceq 0]}\), we fix an arbitrary \(\vvX\)-regular trace \(\flA \in \semPS[\dap{\vvX}{\D{\tE} \leq 0}]\) with duration \(\flTA\).
    The argument proceeds by case analysis on the duration \(\flTA\):
    \begin{enumerate}
      \item If \(\flTA = 0\): The result is immediate as \(\flA(0) = \sO\) and \(\sO \in \sem{\tE \preceq 0}\) by assumption.
      \item If \(\flTA > 0\):
        Since \(\flA\) is \(\vvX\)-regular and $\tE$ is differential-free, the valuation of the term \(\tE\) along the trace is continuous on \(\flInt\) and differentiable on \((0,\flTA)\) (Lemma~\ref{lem:diff}).
      By the Mean Value Theorem (Theorem~\ref{thm:mean-value}), there exists a \(\xi \in (0,\flTA)\) such that:
      \[
        \semS{\tE}{\flAT} = \flTA \, \ddt \semS{\tE}{\flA(\xi)} + \semS{\tE}{\flA(0)}.
      \]
      Again, since $\tE$ is differential free, by applying the Differential Lemma~\ref{lem:diff}, we obtain:
      \[
        \semS{\tE}{\flAT} = \flTA \, \semS{\D{\tE}}{\flA(\xi)} + \semS{\tE}{\flA(0)}. \label{eq:mvt}
      \]
      Since \(\semS{\tE}{\flA(0)} \preceq 0\) by assumption and \(\semS{\D{\tE}}{\flA(\xi)} \leq 0\) by the differential constraint, it follows that \(\semS{\tE}{\flAT} \preceq 0\).
    \end{enumerate}
    In both cases, the terminal state reached by the trace \(\flA\) satisfies \(\tE \preceq 0\).
    As \(\flA\) was chosen arbitrarily, it follows by the semantics of the box modality that \(\sO \in \sem{\fBox[\dap{\vvX}{\D{\tE} \leq 0}][\tE \preceq 0]}\).
    \item[\lref{ax:dhc}] Let \(\sO \in \sem{\D{\tE} = 0}\).
    To show \(\sO \in \sem{\fBox[\dap{\vvX}{\tE = 0}][\D{\tE} = 0]}\), fix an arbitrary \(\vvX\)-regular trace \(\flA \in \semPS[\dap{\vvX}{\tE = 0}]\) with duration \(\flTA\).
    We distinguish two cases based on the duration \(\flTA\):
    \begin{enumerate}
      \item If \(\flTA = 0\): The goal is immediate as \(\flA(0) = \sO\) and \(\sO \in \sem{\D{\tE} = 0}\) by assumption.
      \item If \(\flTA > 0\): Since \(\flA\) is \(\vvX\)-regular, the valuation of the term \(\tE\) along the trace is differentiable on \((0,\flTA)\) (Lemma~\ref{lem:diff}).
      By the semantics of DAPs, we have \(\flA(\flInt) \subseteq \sem{\tE = 0}\).
      It follows that for every \(t \in \flInt\), the valuation satisfies \(\semS{\tE}{\flAt} = 0\).
      Differentiating both sides of this equation yields \(\ddt \semS{\tE}{\flAt} = 0\).
      By the Differential Lemma~\ref{lem:diff}, this implies \(\semS{\D{\tE}}{\flAt} = 0\) for \(t \in \flInt\).
    \end{enumerate}
    In both cases, the terminal state satisfies \(\D{\tE} = 0\).
    As the choice of trace was arbitrary, we conclude \(\sO \in \sem{\fBox[\dap{\vvX}{\tE = 0}][\D{\tE} = 0]}\).
  \end{itemize}
\end{proof}

\begin{proof}[Soundness of Differential Refinement Axioms]
  We prove the validity of each axiom separately:
  \begin{itemize}
    \item[\lref{ax:dc}] Let \(\sO \in \sem{\fBox[\dap][\fR]}\).
    To show that \(\sO \in \sem{\dap \rfle \dap{\vvX}{\fR}}\), fix an arbitrary \(\vvX\)-regular trace \(\flA \in \semPS[\dap]\).
    By the reflexivity of the coincidence relation (\(\coincide[\vvX][\flA][\flA]\)), 
    it suffices to show that \(\flA\) is itself satisfies \(\flA \in \semPS[\dap{\vvX}{\fR}]\).

    By the semantics of DAPs, we already have \(\flA(\flInt) \subseteq \sem{\fF}\).
    To show \(\flA(\flInt) \subseteq \sem{\fR}\), consider an arbitrary \(\tau \in \flInt\).
    Since DAPs allow for arbitrary durations, the restriction of \(\flA\) to \([0,\tau]\) is a valid execution of \(\dap\) starting at \(\sO\).
    By the box modality assumption, it follows that \(\flA(\tau) \in \sem{\fR}\).
    As \(\tau\) was arbitrary, we conclude \(\flA(\flInt) \subseteq \sem{\fR}\).
    Thus, \(\flA \in \semPS[\dap{\vvX}{\fR}]\), completing the proof.
    \item[\lref{ax:dr}] Let \(\sO \in \States\).
    To show \(\sO \in \sem{\forall \vvZ \forall \vvZp \left(
    \dap{\vvX,\vvZ}{\fR \Land \fF} \rfle \dap\right)}\),
    fix arbitrary vectors \(\ttU, \ttV \in \R^m\) and define \(\sN = \stateUpd[\sO][(\vvZ,\vvZp)][(\ttU,\ttV)]\).
    Now, it suffices to show that for an arbitrary trace \(\flA \in \semPS[\dap{\vvX,\vvZ}{\fR \Land \fF}][\sN]\), there exists a trace \(\flB \in \semPS[\dap][\sN]\) such that \(\coincide\).

    Define the candidate trace \(\flB \in \flSpace\) with duration \(\flTB = T_\flA\) and for \(t \in [0,\flTB]\):
    \begin{align*}
      \flB(t)(v) \equiv \begin{cases}
          \flAt(v) & \text{ if } v \in \vvX\cup\vvXp, \\
          \sN(v) & \text{ otherwise. }
      \end{cases}
    \end{align*}
    To verify \(\flB \in \semPS[\dap][\sN]\), we need to check two properties:
    \begin{enumerate}
      \item \(\vvX\)-regularity: The regularity of \(\flB\) is immediate by the \((\vvX,\vvZ)\)-regularity of the original trace \(\flA\).
      \item \(\flA([0,\flTB]) \subseteq \sem{\fF}\): Recall the syntactic side condition: \(\vvZ,\vvZp \notin \fF\).
      By construction, we have \(\coincide[\{\vvZ,\vvZp\}^\complement]\).
      Since the original trace satisfies the augmented constraint \(\flA(\flInt) \subseteq \sem{\fR \Land \fF} \subseteq \sem{\fF}\), applying the Coincidence Lemma~\ref{lem:coincidence} pointwise ensures that \(\flB([0,\flTB]) \subseteq \sem{\fF}\).
    \end{enumerate}
    
    Combining these properties, we have \(\flB \in \semPS[\dap][\sN]\).
    As the choice of \(\flA\) was arbitrary, this completes the proof. \qed
    \item[\lref{ax:dg}] Let \(\sO \in \sem{\fBox[\dap][\det{\mA} \neq 0]}\).
    To establish the refinement, fix an arbitrary vector \(\ttU \in \R^n\).
    We first demonstrate the existence of a unique derivative vector \(\ttV\) at the initial state \(\sO\).
    Consider the following linear system 
    \begin{equation}\label{eq:sys}
      \semS{\mA}{\sO} \, \ttV = \semS{\mB}{\sO}\ttU + \semS{\ttC}{\sO}.
    \end{equation}
    Assume \(\sO \in \sem{\fF}\).
    Otherwise, there is no trace \(\flA \in \semPS[\dap]\) and the refinement holds vacuously.
    By the semantics of the box modality and \(\sO \in \sem{\fF}\), the zero-duration trace satisfies \(\delta_\sO \in \semPS[\dap]\).
    Thus, the initial state satisfies \(\sO \in \sem{\mdet{\mA} \neq 0}\).
    Hence, by Lemma \ref{lem:det_correct}, \(\det{\semS{\mA}{\sO}} = \semS{\mdet{\mA}}{\sO} \neq 0\).
    By Theorem~\ref{thm:det_unique_solution}, the system \eqref{eq:sys} possesses a unique solution \(\ttV \in \R^n\).

    Define the candidate state \(\sN \equiv \stateUpd[\sO][(\vvZ,\vvZp)][(\ttU,\ttV)]\).
    It suffices to show that for every \(\vvX\)-regular trace \(\flA \in \semPS[\dap][\sN]\), there exists a (\(\vvX,\vvZ\))-regular trace \(\flB \in \semPS[\dap{\vvX,\vvZ}{
        \fF \Land 
          \mA \vvZp = \mB\vvZ + \ttC}][\sN]\) 
    such that \(\coincide\).

    By the box modality assumption and Lemma \ref{lem:det_correct}, we have \(\semS{\mdet{\mA}}{\flAt} = \det{\semS{\mA}{\flAt}} \neq 0\) for all \(t \in \flInt\). 
    Consequently, the matrix \(\semS{\mA}{\flAt}\) is invertible on the interval \(\flInt\). 
    Further, by the Differential Lemma~\ref{lem:diff}, the mapping $t \mapsto \semS{\mA}{\flAt}$ is continuous.
    Thus, by Lemma~\ref{lem:matrix_inverse_continuity} the mapping \(t \mapsto (\semS{\mA}{\flAt})^{-1}\) is a continuous function.

    We define the time-dependent linear differential equation to determine the trajectory of \(\vvZ\):
    \begin{equation}\label{eq:ode}
      \vvYp(t) = \mE(t)\vvY(t) + \mathbf{f}(t), \quad \vvY(0) = \ttU,
    \end{equation}
    where \(\mE(t) = (\semS{\mA}{\flAt})^{-1}\semS{\mB}{\flAt}\) and \(\ttF(t) = (\semS{\mA}{\flAt})^{-1} \semS{\ttC}{\flAt}\).
    Since the mappings \(t \mapsto \mE(t)\) and \(t \mapsto \ttF(t)\) are continuous on the compact interval \(\flInt\), by Theorem~\ref{thm:linear_ode_existence}, equation~\eqref{eq:ode} has a unique solution \(\vvY : \flInt \rightarrow \R^n\) of class $C^1$.

    We define the refinement trace \(\flB \in \flSpace\) with duration \(\flTB = T_\flA\) by
    \begin{align*}
      \flBt(\vvZ) &\equiv \vvY(t), \\
      \flBt(\vvZp) &\equiv \vvYp(t), \\
      \flBt(v) &\equiv \flAt(v) \text{ for } v \not\in \vvZ \cup \vvZp.
    \end{align*}
    We show that $\flB$ satisfies the required properties:
    \begin{enumerate}
      \item Differential constraint: By construction, equation \eqref{eq:ode} implies for all $t \in \flInt[\flB]$:
        \[
          \semS{\vvZp}{\flBt} = (\semS{\mA}{\flAt})^{-1}\semS{\mB}{\flAt} \semS{\vvZ}{\flBt} + (\semS{\mA}{\flAt})^{-1} \semS{\ttC}{\flAt}.
        \]
      Also, by $\vvZ,\vvZp \notin \mA, \vvB, \vvC$, applying coincidence of terms yields
      \[
          \semS{\vvZp}{\flBt} = (\semS{\mA}{\flBt})^{-1}\semS{\mB}{\flBt} \semS{\vvZ}{\flBt} + (\semS{\mA}{\flBt})^{-1} \semS{\ttC}{\flBt}.
        \]
    Finally, rearranging this equation and folding the semantics of terms, we obtain \(\flB([0,\flTB]) \subseteq \sem{\mA \vvZp = \mB \vvZ + \ttC}\).

    Also, by $\coincide[(\vvZ \cup \vvZp)^\complement]$ and $\vvZ,\vvZp \notin \fF$, applying the Coincidence Lemma~\ref{lem:coincidence} pointwise yields \(\flB([0,\flTB]) \subseteq \sem{\fF}\).
  \item $(\vvX,\vvZ)$-regularity: On $\vvX$ the regularity of $\flB$ is immediate by construction.
    The regularity of $t \mapsto \flBt(\vvZ)$ follows from the regularity of $t \mapsto \vvY(t)$.
    Further, by construction we have 
    \[
      \ddt \flBt(\vvZ) = \vvYp(t) = \flBt(\vvZp).
    \]
    \end{enumerate}
    
    Finally, combining these properties we have \(\flB \in \semPS[\dap{\vvX,\vvZ}{\fF \Land \mA \vvZp = \mB\vvZ + \ttC}][\sN]\),
    finishing the proof. 
    \item[\lref{ax:ag}] Let \(\sO \in \sem{\fBox[\dap{\vvX}{\fF}][(\ttE = \ttZero \Land \mdet{\jacobian[\vvXp][\ttE]}\neq 0)] }\).
    To establish the claim, it suffices to show that there exist vectors \(\ttR_1,\ttR_2 \in \R^m\) such that the modified state \(\sN = \stateUpd[\sO][(\vvX,\vvXp)][(\ttR_1,\ttR_2)]\) satisfies the refinement relation \[\dap
    \rfle
    \dap{\vvX, \vvZ}{\fF \Land \vvZ = \ttG}.\]

    Assume \(\sO \in \sem{\fF}\).
    Otherwise, there is no \(\vvX\)-regular trace with \(\flA \in \semPS[\dap]\) and the claim holds vacuously.

    Thus, by \(\sO \in \sem{\fF}\) it holds \(\delta_\sO \in \semPS[\dap]\).
    Crucially, the box assumption implies \(\sO = \delta_\sO(0) \in \sem{\ttE = \ttZero \Land \mdet{\jacobian[\vvXp][\ttE]}\neq 0}\).
    Applying Lemma \ref{lem:c_flow}, there exists a unique \(C^2\) trace \(\flC : J \rightarrow \States\) with \(\flC(0) = \sO\).
   
    Hence, by the \(C^2\) regularity of \(\flC\), the valuation \(t \mapsto \semS{\ttG}{\flC(t)}\) is continuously differentiable on \(J\).
    Therefore, \(t \mapsto \ddt \semS{\ttG}{\flCt}\) is well-defined on \(J\).

    We now define a candidate state $\sN \in \States$ by
    \begin{align*}
      \sN(\vvZ) &\equiv \semS{\ttG}{\sO},\\
      \sN(\vvZ') &\equiv \ddt
        \semS{\ttG}{\flC(t)} \Big|_{t=0},\\
      \sN(v)
        &\equiv \sO(v) \text{ for } v \not\in \vvZ \cup \vvZp.
    \end{align*}
    To show that \(\sN \in \sem{\dap \rfle \dap{\vvX,\vvZ}{\fF \Land \vvZ = \ttG}}\), fix an arbitrary \(\vvX\)-regular trace \(\flA \in \semPS[\dap][\sN]\).
    It suffices to construct a \((\vvX,\vvZ)\)-regular trace \(\flB \in \flRegular[\vvX,\vvZ]\) such that \(\flB \in \semPS[\dap{\vvX,\vvZ}{\fF \Land \vvZ = \ttG}][\sN]\) and \(\coincide\).
    
    At any \(t \in \flInt\), the trace \(\flA\) satisfies the algebraic condition of Lemma \ref{lem:c_flow}.
    Since the \(C^2\) trajectory \(\flC\) is locally unique, \(\flA\) and \(\flC\) must be identical on a neighborhood \((t-\varepsilon,t+\varepsilon)\), ensuring that \(t \mapsto \flAt(\vvX)\) is \(C^2\) on \(\flInt\).
    Therefore, by the chain rule (Lemma \ref{lem:diff}) the semantic valuation \(t \mapsto \semS{\ttG}{\flAt}\) is of class \(C^1(\flInt,\R^m)\).

    We define the candidate trace \(\flB \in \flSpace\) by
    \begin{align*}
      &\flBt(\vvZ) \equiv \semS{\ttG}{\flAt}, \\
      &\flBt(\vvZp) \equiv \ddt \semS{\ttG}{\flAt}, \\
      &\flBt(v) \equiv \flAt(v) \text{ for } v \not\in \vvZ \cup \vvZp.
    \end{align*}
    We show that $\flB$ satisfies the required trace conditions:
    \begin{enumerate}
      \item Differential constraint: By construction, we have \(\flB(\flInt[\flB]) \subseteq \sem{\vvZ = \ttG}\).
      Applying the Coincidence Lemma pointwise, by the syntactic side condition \(\vvZ,\vvZp \not\in\FV{\fF}\) and \(\flA(\flInt) \subseteq \sem{F}\), we have \(\flB(\flInt[\flB]) \subseteq \sem{F}\).
      Thus, \(\flB(\flInt[\flB]) \subseteq \sem{F} \cap \sem{\vvZ = \ttG} = \sem{F \Land \vvZ = \ttG}\).
      \item \((\vvX,\vvZ)\)-regularity: The regularity of \(t \mapsto \flBt(\vvX)\) follows directly from the regularity of \(t \mapsto \flAt(\vvX)\).
      Moreover, \(t \mapsto \flBt(\vvZ)\) is of class \(C^1(\flInt[\flB],\R^m)\) by the regularity of \(t \mapsto \semS{\ttG}{\flAt}\).
    \end{enumerate}
    Combining these properties, we have \(\flB \in \semPS[\dap{\vvX, \vvZ}{\fF \Land \vvZ = \ttG}][\sN]\).
  \end{itemize}
\end{proof}
\begin{proof}[Soundness of Subsystem Axioms]
  \begin{itemize}
    \item[\lref{ax:dm}] Let \(\sO \in \sem{\dap \rfle \dap{\vvX}{\fG}}\).
    To show \(\sO \in \sem{\dap{\vvX}{\fF \Land \fR} \rfle \dap{\vvX}{\fG \Land \fR}}\), fix an arbitrary \(\vvX\)-regular trace \(\flA \in \semPS[\dap{\vvX}{\fF \Land \fR}]\).
    It suffices to show that there exists a trace \(\flB \in \semPS[\dap{\vvX}{\fG \Land \fR}]\) such that \(\coincide\).

    By the semantics of DAPs, \(\flA(\flInt) \subseteq \sem{\fF \Land \fR}  \subseteq \sem{\fF}\), which implies \(\flA \in \semPS[\dap]\).
    From the refinement assumption, it follows that there exists a trace \(\flB \in \semPS[\dap{\vvX}{\fG}]\) with \(\coincide\).

    Moreover, because \(\flA\) and \(\flB\) are \(\vvX\)-regular and share the initial state \(\sO\), they must remain constant outside \(\vvX \cup \vvXp\).
    Since they also coincidence on \(\vvX \cup \vvXp\) and have equal duration, they are identical.
    By \(\flA(\flInt) \subseteq \sem{\fR}\), we conclude \(\flB([0,T_\flB]) \subseteq \sem{\fR}\).
    Thus, \(\flB \in \semPS[\dap{\vvX}{\fG \Land \fR}]\), as required.
  \item[\lref{ax:dp}] 
    Let $\sO \in \States$ be such that
    \begin{align*}
      \sO &\in \sem{\fBox[\dap{\vvX}{\exists \vvYp \exists \vvY \fF}][(\fF)]}, \\
      \sO &\in \sem{\fBox[\dap{\vvX, \vvY}{\exists \vvYp \exists \vvY \fG}][(\fG)]},\\
      \sO &\in \sem{\dap \rfle \dap{\vvX}{\fG}}.
    \end{align*}
    Fix an arbitrary trace $\flA \in \semPS[\dap{\vvX,\vvY}{\fF}][\sO]$.
    It suffices to show that there exists a $(\vvX,\vvY)$-regular trace \(\flB \in \semPS[\dap{\vvX,\vvY}{\fG}]\) such that \( \coincide[\vvX,\vvY]\).
    
    We construct the trace $\tilde{\flA} \in \flSpace$ with duration $\flTB = \flTA$:
    \begin{align*}
    \tilde{\flA}(t)(\vvY) &\equiv \sO(\vvY), \\
      \tilde{\flA}(t)(\vvYp) &\equiv \sO(\vvYp), \\
      \tilde{\flA}(t)(v) &\equiv \flAt(v) \text{ for } v \not\in \vvY \cup \vvYp.
    \end{align*}
    This construction implies \(\tilde{\flA}(\flInt) \subseteq \sem{\exists \vvYp \exists \vvY\fF}\).
    By the assumption \(\sO \in \sem{\fBox[\dap{\vvX}{\exists \vvYp \exists \vvY\fF}][\fF]}\), we have \(\tilde{\flA}(\flInt) \subseteq \sem{\fF}\).
    Thus, \(\tilde{\flA} \in \sem{\dap}\).
    Applying the refinement assumption to $\tilde{\flA}$, there exists a trace \(\flB \in \semPS[\dap{\vvX}{\fG}]\) such that  \(\coincide[\vvX][\tilde{\flA}][\flB]\).

    We construct the trace $\tilde{\flB} \in \flSpace$ with duration $\flTB = \flTA$:
    \begin{align*}
    \tilde{\flB}(t)(\vvY) &\equiv \flAt(\vvY), \\
      \tilde{\flB}(t)(\vvYp) &\equiv \flAt(\vvYp), \\
      \tilde{\flB}(t)(v) &\equiv \flBt(v) \text{ for } v \not\in \vvY \cup \vvYp.
    \end{align*}
    By construction, we have $\tilde{\flB} \in \flRegular[\vvX,\vvY]$ and \(\tilde{\flB}(\flInt) \subseteq \sem{\exists \vvYp \exists \vvY G}\).
    Moreover, by assumption \(\sO \in \sem{\fBox[\dap{\vvX, \vvY}{\exists \vvYp \exists \vvY \fG}][\fG]}\), we have \(\tilde{\flB}(\flInt[\flB]) \subseteq \sem{G}\), completing the proof.\qed
  \end{itemize}
\end{proof}

\subsection{Derived Rules.}
\label{sec:appendex_derived_rules}
This section completes the proof of Theorem \ref{thm:derived_axioms}.
\begin{proof}
  We derive each of the rules individually:
  \begin{enumerate}
    \item [\lref{ax:jacobian}] The Jacobian axiom follows directly from the definition of syntactic (partial) derivatives (Definitions \ref{def:partial_diff}, \ref{def:jacobian}), note that as all definitions are syntactic, direct symbolic manipulations suffice.
    \item[\lref{rl:dw}] First, recall that the \lref{ax:k}~axiom
      \begin{align*}
        \axtag{ax:k} & \quad \lquote{ax:k}
      \end{align*}
      and the \lref{rl:g}~rule:
      \begin{sequentproof}[align]
        \ax[rl:g]{}{\fP}
        \un[rl:g]{\Gamma}{\fBox}
      \end{sequentproof}
      Next, applying the \lref{ax:k} and the \lref{ax:dw}~axiom, we have
      \begin{sequentproof}[align]
        \ax{\Gamma}{\fBox[\dap][(\fF \Limplies \fP)]}
        \ax*[ax:dw]{\Gamma}{\fBox[\dap][\fF]}
        \bi[ax:k]{\Gamma}{\fBox[\dap]}
      \end{sequentproof}
      The open goal can we further reduced by generalization:
      \begin{sequentproof}[align]
        \ax{\fF}{\fP}
        \un[rl:g]{\Gamma}{\fBox[\dap][(\fF \Limplies \fP)]}
      \end{sequentproof}

      Finally, composing the above derivation construct the desired derivation:
      \begin{sequentproof}[align]
        \ax{\fF}{\fP}
        \un[rl:dw]{\Gamma}{\fBox[\dap],\Delta}
      \end{sequentproof}
   \item[\lref{rl:da}] First, by transitivity we split with the intermediate formula $\fR$: 
      \begin{sequentproof}[align]
        \ax{\Gamma}{\dap{\vvX}{\fF} \rfle \dap{\vvX}{\fR}, \Delta}
        \ax{\Gamma}{\dap{\vvX}{\fR} \rfle \dap{\vvX}{\fG}, \Delta}
        \bi[ax:tr]{\Gamma}{\dap \rfle \dap{\vvX}{\fG}, \Delta}
      \end{sequentproof}
      Then, by another application of transitivity, we split the remaining goal via the intermediate constraint $\fF \Land \fR$:
      \begin{sequentproof}[align]
        \ax{\Gamma}{\dap{\vvX}{\fF} \rfle \dap{\vvX}{\fF \Land \fR}, \Delta}
        \ax*[ax:dr]{\Gamma}{\dap{\vvX}{\fF \Land \fR } \rfle \dap{\vvX}{\fR}, \Delta}
        \bi[ax:tr]{\Gamma}{\dap{\vvX}{\fF} \rfle \dap{\vvX}{\fR}, \Delta}
      \end{sequentproof}
      Finally, by differential cut and differential weakening, the refinement reduces to an implication between the corresponding constraints:
      \begin{sequentproof}[align]
        \ax[rl:dw]{\fF}{\fR}
        \un[rl:dw]{\Gamma}{\fBox[\dap][(\fF \Land \fR)],\Delta}
        \un[ax:dc]{\Gamma}{\dap{\vvX}{\fF} \rfle \dap{\vvX}{\fF \Land \fR}, \Delta}
      \end{sequentproof}
      Combining the above derivations we have the desired goal:
      \begin{sequentproof}
        \ax{\fF}{\fR}
        \ax{\Gamma}{\dap{\vvX}{\fR} \rfle \dap{\vvX}{\fG}, \Delta}
        \bi[rl:da]{\Gamma}{\dap \rfle \dap{\vvX}{\fG}, \Delta}
      \end{sequentproof}
    \item[\lref{rl:di}] 
      Unfolding the refinement equality, it suffices to show that
    \[
        \Gamma \vdash \dap{\vvX}{\fF \Land \ttE \preceq \ttZero} \rfle \dap
    \]
    and
    \[
      \Gamma \vdash\dap \rfle \dap{\vvX}{\fF \Land \ttE \preceq \ttZero}.
    \]
    We first consider the right-to-left refinement, which follows immediately from the differential refinement axiom~\lref{ax:dr}:
    \begin{sequentproof}
      \ax*[ax:dr]{\Gamma}{\dap{\vvX}{\ttE \preceq \ttZero \Land \fF} \rfle \dap, \Delta}
      \un[rl:da]{\Gamma}{\dap{\vvX}{\fF \Land \ttE \preceq \ttZero} \rfle \dap, \Delta}
    \end{sequentproof}

    To obtain the left-to-right refinement
    \[
      \Gamma \vdash \dap{\vvX}{\fF} \rfle \dap{\vvX}{\fF \Land \ttE \preceq \ttZero}, \Delta
     \] 
     we apply the transitivity axiom~\lref{ax:tr} along the following chain of formulas:
     \begin{align*}
       \fG_1 &\equiv \fF,\\
       \fG_2 &\equiv \fF \Land \D{\ttE} \leq \ttZero,\\
       \fG_3 &\equiv \fF \Land \ttE \preceq \ttZero \Land \D{\ttE} \leq \ttZero, \\
       \fG_4 &\equiv \fF \Land \ttE \preceq \ttZero.
     \end{align*}
     \begin{enumerate}
       \item First, by the \lref{ax:dc}~axiom, the refinement is reduced to a box-modality:
     \begin{sequentproof}
      \ax{\Gamma}{\fBox[\dap{\vvX}{\fF}][\D{\ttE} \leq \ttZero], \Delta}
      \un[rl:dw]{\Gamma}{\fBox[\dap{\vvX}{\fF}][(F \Land \D{\ttE} \leq \ttZero)], \Delta}
      \un[ax:dc]{\Gamma}{\dap{\vvX}{\fF} \rfle \dap{\vvX}{\fF \Land \D{\ttE} \leq \ttZero}, \Delta}
      \un[rl:unfold]{\Gamma}{\dap{\vvX}{\fG_1} \rfle \dap{\vvX}{\fG_2}, \Delta}
    \end{sequentproof}
    \item Next in the refinement chain, we reduce the expression to a box modality by removing the system constraint $\fF$ using axiom~\lref{ax:dm}, after which axiom~\lref{ax:dc} applies.
   \begin{sequentproof}
      \ax{\Gamma}{\fBox[\dap{\vvX}{\D{\ttE} \leq \ttZero}][\ttE \preceq \ttZero], \Delta}
      \un[ax:dc]{\Gamma}{\dap{\vvX}{\D{\ttE} \leq \ttZero} \rfle \dap{\vvX}{\D{\ttE} \leq \ttZero \Land \ttE \preceq \ttZero}, \Delta}
      \un[ax:dm]{\Gamma}{\dap{\vvX}{\D{\ttE} \leq \ttZero \Land \fF} \rfle \dap{\vvX}{\D{\ttE} \leq \ttZero \Land \ttE \preceq \ttZero \Land \fF}, \Delta}
      \un[rl:da]{\Gamma}{\dap{\vvX}{\fF \Land \D{\ttE} \leq \ttZero} \rfle \dap{\vvX}{\fF \Land \ttE \preceq \ttZero \Land \D{\ttE} \leq \ttZero}, \Delta}
      \un[rl:unfold]{\Gamma}{\dap{\vvX}{\fG_2} \rfle \dap{\vvX}{\fG_3}, \Delta}
    \end{sequentproof} 
    By $\vvXp \notin \ttE$, the remaining goal is closed by cutting in $\ttE \preceq \ttZero$ and applying the \lref{ax:dileq}~axiom:
    \begin{sequentproof}
      \ax{\Gamma}{\ttE \preceq \ttZero, \Delta}
      \ax*[ax:dileq]{\Gamma, \ttE \preceq \ttZero}{\fBox[\dap{\vvX}{\D{\ttE} \leq \ttZero}][\ttE \preceq \ttZero], \Delta}
      \bi[rl:cut]{\Gamma}{\fBox[\dap{\vvX}{\D{\ttE} \leq \ttZero}][\ttE \preceq \ttZero], \Delta}
    \end{sequentproof}
    \item The last refinement is immediately closed by \lref{ax:dr}:
    \begin{sequentproof}
      \ax*[ax:dr]{\Gamma}{\dap{\vvX}{\fF \Land \ttE \preceq \ttZero \Land \D{\ttE} \leq \ttZero} \rfle \dap{\vvX}{\fF \Land \ttE \preceq \ttZero }, \Delta}
      \un[rl:unfold]{\Gamma}{\dap{\vvX}{\fG_3}\rfle \dap{\vvX}{\fG_4 }, \Delta}
    \end{sequentproof}
   
    Finally, the desired derivation follows by transitivity form the above derivations: 
    \begin{sequentproof}[align]
      \ax{\Gamma}{\ttE \preceq \ttZero, \Delta}
      \ax{\Gamma}{\fBox[\dap][\D{\ttE} \leq \ttZero], \Delta}
      \bi[rl:di]{\Gamma}{\dap \rfeq \dap{\vvX}{\fF \Land \ttE \preceq \ttZero}, \Delta}
    \end{sequentproof}
    \end{enumerate}
    \item[\lref{rl:dhc}] 
      Unfolding the refinement equality, it suffices to show that
    \[
      \Gamma \vdash \dap{\vvX}{\fF \Land \D{\ttE} = \ttZero} \rfle \dap
    \]
    and
    \[
      \Gamma \vdash\dap \rfle \dap{\vvX}{\fF \Land \D{\ttE} = \ttZero}.
    \]
    We first consider the right-to-left refinement, which follows immediately from the differential refinement axiom~\lref{ax:dr}:
    \begin{sequentproof}
      \ax*[ax:dr]{\Gamma}{\dap{\vvX}{\D{\ttE} = \ttZero \Land \fF} \rfle \dap, \Delta}
      \un[rl:da]{\Gamma}{\dap{\vvX}{\fF \Land \D{\ttE} = \ttZero} \rfle \dap, \Delta}
    \end{sequentproof}

    To obtain the left-to-right refinement
    \[
      \Gamma \vdash \dap{\vvX}{\fF} \rfle \dap{\vvX}{\fF \Land \D{\ttE} = \ttZero}, \Delta
     \] 
     we apply the transitivity axiom~\lref{ax:tr} along the following chain of formulas:
     \begin{align*}
       \fG_1 &\equiv \fF,\\
       \fG_2 &\equiv \fF \Land \ttE = \ttZero,\\
       \fG_3 &\equiv \fF \Land \D{\ttE} = 0 \Land \ttE = \ttZero, \\
       \fG_4 &\equiv \fF \Land  \D{\ttE} = 0.
     \end{align*}
     \begin{enumerate}
       \item First, by the \lref{ax:dc}~axiom, the refinement is reduced to a box-modality:
     \begin{sequentproof}
      \ax{\Gamma}{\fBox[\dap{\vvX}{\fF}][\ttE = \ttZero], \Delta}
      \un[ax:dc]{\Gamma}{\dap{\vvX}{\fF} \rfle \dap{\vvX}{\fF \Land \ttE = \ttZero}, \Delta}
      \un[rl:unfold]{\Gamma}{\dap{\vvX}{\fG_1} \rfle \dap{\vvX}{\fG_2}, \Delta}
    \end{sequentproof}
    \item Next in the refinement chain, we reduce the expression to a box modality by removing the system constraint $\fF$ using axiom~\lref{ax:dm}, after which axiom~\lref{ax:dc} applies.
   \begin{sequentproof}
     \ax{\Gamma}{\fBox[\dap{\vvX}{\ttE = \ttZero}][\D{\ttE} = \ttZero], \Delta}
      \un[ax:dc]{\Gamma}{\dap{\vvX}{\ttE = \ttZero} \rfle \dap{\vvX}{\ttE = \ttZero \Land \D{\ttE} = \ttZero}, \Delta}
      \un[ax:dm]{\Gamma}{\dap{\vvX}{\ttE = \ttZero \Land \fF} \rfle \dap{\vvX}{\ttE = \ttZero \Land \D{\ttE} = \ttZero \Land \fF}, \Delta}
      \un[rl:da]{\Gamma}{\dap{\vvX}{\fF \Land \ttE = \ttZero} \rfle \dap{\vvX}{\fF \Land \D{\ttE} = \ttZero \Land \ttE = \ttZero}, \Delta}
      \un[rl:unfold]{\Gamma}{\dap{\vvX}{\fG_2} \rfle \dap{\vvX}{\fG_3}, \Delta}
    \end{sequentproof} 
    By $\vvXp \notin \ttE$, the remaining goal is closed by cutting in $\ttE \preceq \ttZero$ and applying the \lref{ax:dhc}~axiom:
    \begin{sequentproof}
      \ax{\Gamma}{\D{\ttE} = \ttZero, \Delta}
      \ax*[ax:dhc]{\Gamma, \D{\ttE} = \ttZero}{\fBox[\dap{\vvX}{\ttE = \ttZero}][\D{\ttE} = \ttZero], \Delta}
      \bi[rl:cut]{\Gamma}{\fBox[\dap{\vvX}{\ttE = \ttZero}][\D{\ttE} = \ttZero], \Delta}
    \end{sequentproof}
    \item The last refinement is immediately closed by \lref{ax:dr}:
    \begin{sequentproof}
      \ax*[ax:dr]{\Gamma}{\dap{\vvX}{\fF \Land \D{\ttE} = \ttZero \Land \ttE = \ttZero} \rfle \dap{\vvX}{\fF \Land \D{\ttE} = \ttZero }, \Delta}
      \un[rl:unfold]{\Gamma}{\dap{\vvX}{\fG_3}\rfle \dap{\vvX}{\fG_4 }, \Delta}
    \end{sequentproof}
   
    Finally, the desired derivation follows by transitivity form the above derivations:
    \begin{sequentproof}[align]
      \ax{\Gamma}{\D{\ttE} = \ttZero, \Delta}
      \ax{\Gamma}{\fBox[\dap][\ttE = \ttZero], \Delta}
      \bi[rl:dhc]{\Gamma}{\dap \rfeq \dap{\vvX}{\fF \Land \D{\ttE} = \ttZero}, \Delta}
    \end{sequentproof}
    \end{enumerate} 
  \end{enumerate}
\end{proof}

\subsection{Index Reduction.} \label{sec:index_reduction_proof}
\begin{proof}
  To derive the refinement property \[\Gamma \vdash \dap{\vvX}{\ttF_m = \ttZero} \rfeq \dap{\vvX}{\ttF_0 = \ttZero},\]
  we proceed by induction over \(m\).
  \begin{description}
    \item[Case $m = 0$.] This case is immediate by \lref{rl:da}:
    \begin{sequentproof}[align]
      \ax*[rl:real]{}{\ttF_{0} = \ttZero \leftrightarrow \ttF_{0} = \ttZero}
      \un[rl:da]{\Gamma}{\dap{\vvX}{\ttF_{0} = \ttZero} \rfeq \dap{\vvX}{\ttF_0 = \ttZero}}
    \end{sequentproof} 
    \item[Case $m = k+1$.] For the induction step, assume that the following is a valid derivation of \dAL:
    \begin{sequentproof}
      \ax{\Gamma}{\bigwedge_{i=0}^{k-1} \D{\ttR^A_i} = 0}
      \ax{}{\bigwedge_{i=0}^{k-1} (\ttF_i = \ttZero \rightarrow \ttR_i^A = \ttZero)}
      \bi{\Gamma}{\dap{\vvX}{\ttF_{k} = \ttZero} \rfeq \dap{\vvX}{\ttF_0 = \ttZero}}
    \end{sequentproof}
    
    To derive the condition for $m = k+1$, it suffices to prove that
    \begin{sequentproof}
      \ax{\Gamma}{\D{\ttR^A_k} = 0}
      \ax{\ttF_k = \ttZero}{\ttR_k^A = 0}
      \bi{\Gamma}{\dap{\vvX}{\ttF_{k+1} = \ttZero} \rfeq \dap{\vvX}{\ttF_k = \ttZero}}
    \end{sequentproof}
    is a valid derivation of \dAL.

    The $\lref{rl:dhc}$ rule allows adding the constraint $\D{\ttG_k^A} = 0$ to the system, since it is invariant under $\ttG_k^A = 0$ and $\vvXp \not\in \ttG_k$: 
    \begin{sequentproof}
      \ax{\Gamma}{\D{\ttR_k^A} = 0}
      \ax{\Gamma}{\fBox[\dap{\vvX}{\ttF_k = \ttZero}][\ttR_k^A = \ttZero]}
      \bi[rl:dhc]{\Gamma}{\dap{\vvX}{\ttF_k = \ttZero \Land \D{\ttR_k^A} = \ttZero} \rfeq \dap{\vvX}{\ttF_k = \ttZero}}
      \un[rl:unfold]{\Gamma}{\dap{\vvX}{\ttF_{k+1} = \ttZero} \rfeq \dap{\vvX}{\ttF_k = \ttZero}}
    \end{sequentproof}
    Finally, the box goal can be reduced to:
    \begin{sequentproof}
      \ax{\ttF_k = \ttZero}{\ttR_k^A = \ttZero}
      \un[rl:dw]{\Gamma}{\fBox[\dap{\vvX}{\ttF_k = \ttZero}][\ttR_k^A = \ttZero]}
    \end{sequentproof}
    Combining the above derivation steps, we obtain
    \begin{sequentproof}
      \ax{\Gamma}{\D{\ttR_k^A} = \ttZero}
      \ax{\ttF_k = \ttZero}{\ttR^A_k = \ttZero}
      \bi{\Gamma}{\dap{\vvX}{\ttF_{k+1} = \ttZero} \rfeq \dap{\vvX}{\ttF_k = \ttZero}}
    \end{sequentproof}
    An application of rule~$\lref{rl:rtr}$, along with propositional transformations, yields:
    \begin{sequentproof}
      \ax{\Gamma}{\bigwedge_{i=0}^{k} \D{\ttR_i^A} = \ttZero}
      \ax{}{\bigwedge_{i=0}^k(\ttF_i = \ttZero \rightarrow \ttR_i^A = \ttZero)}
      \bi{\Gamma}{\dap{\vvX}{\ttF_{k+1} = \ttZero} \rfeq \dap{\vvX}{\ttF_0 = \ttZero}}
    \end{sequentproof}
    This establishes the desired result.
  \end{description}
\end{proof}
\end{document}